\gdef\@fpheader{ }
\gdef\@journal{ }
\newif\ifnotoc\notocfalse
\newif\ifemailadd\emailaddfalse
\newif\iftoccontinuous\toccontinuousfalse
\def\@subheader{\@empty}
\def\@keywords{\@empty}
\def\@abstract{\@empty}
\def\@xtum{\@empty}
\def\@dedicated{\@empty}
\def\@arxivnumber{\@empty}
\def\@collaboration{\@empty}
\def\@collaborationImg{\@empty}
\def\@proceeding{\@empty}
\def\@preprint{\@empty}
\newcommand{\subheader}[1]{\gdef\@subheader{#1}}
\newcommand{\keywords}[1]{\if!\@keywords!\gdef\@keywords{#1}\else%
\PackageWarningNoLine{\jname}{Keywords already defined.\MessageBreak Ignoring last definition.}\fi}
\renewcommand{\abstract}[1]{\gdef\@abstract{#1}}
\newcommand{\dedicated}[1]{\gdef\@dedicated{#1}}
\newcommand{\arxivnumber}[1]{\gdef\@arxivnumber{#1}}
\newcommand{\proceeding}[1]{\gdef\@proceeding{#1}}
\newcommand{\xtumfont}[1]{\textsc{#1}}
\newcommand{\correctionref}[3]{\gdef\@xtum{\xtumfont{#1} \href{#2}{#3}}}
\newcommand\jname{JHEP}
\newcommand\acknowledgments{\section*{Acknowledgments}}
\newcommand\preprint[1]{\gdef\@preprint{\hfill #1}}
\newtheorem{theorem}{Theorem}
\newtheorem{lemma}[theorem]{Lemma}
\newenvironment{proof}[1][Proof]{\noindent\textbf{#1.} }{\ \rule{0.5em}{0.5em}}
\newcommand\note[2][]{%
\if!#1!%
\stepcounter{footnote}\footnotetext{#2}%
\else%
{\renewcommand\thefootnote{#1}%
\footnotetext{#2}}%
\fi}
\newtoks\auth@toks
\renewcommand{\author}[2][]{%
  \if!#1!%
    \auth@toks=\expandafter{\the\auth@toks#2\ }%
  \else
    \auth@toks=\expandafter{\the\auth@toks#2$^{#1}$\ }%
  \fi
}
\newtoks\affil@toks\newif\ifaffil\affilfalse
\newcommand{\affiliation}[2][]{%
\affiltrue
  \if!#1!%
    \affil@toks=\expandafter{\the\affil@toks{\item[]#2}}%
  \else
    \affil@toks=\expandafter{\the\affil@toks{\item[$^{#1}$]#2}}%
  \fi
}
\newtoks\email@toks\newcounter{email@counter}%
\newcommand{\emailAdd}[1]{%
\emailaddtrue%
\ifnum\theemail@counter>0\email@toks=\expandafter{\the\email@toks, \@email{#1}}%
\else\email@toks=\expandafter{\the\email@toks\@email{#1}}%
\fi\stepcounter{email@counter}}
\newcommand{\@email}[1]{\href{mailto:#1}{\tt #1}}
\newcommand*\collaboration[1]{\gdef\@collaboration{#1}}
\newcommand*\collaborationImg[2][]{\gdef\@collaborationImg{#2}}
\newcommand\afterLogoSpace{\smallskip}
\newcommand\afterSubheaderSpace{\vskip3pt plus 2pt minus 1pt}
\newcommand\afterProceedingsSpace{\vskip21pt plus0.4fil minus15pt}
\newcommand\afterTitleSpace{\vskip23pt plus0.06fil minus13pt}
\newcommand\afterRuleSpace{\vskip23pt plus0.06fil minus13pt}
\newcommand\afterCollaborationSpace{\vskip3pt plus 2pt minus 1pt}
\newcommand\afterCollaborationImgSpace{\vskip3pt plus 2pt minus 1pt}
\newcommand\afterAuthorSpace{\vskip5pt plus4pt minus4pt}
\newcommand\afterAffiliationSpace{\vskip3pt plus3pt}
\newcommand\afterEmailSpace{\vskip16pt plus9pt minus10pt\filbreak}
\newcommand\afterXtumSpace{\par\bigskip}
\newcommand\afterAbstractSpace{\vskip16pt plus9pt minus13pt}
\newcommand\afterKeywordsSpace{\vskip16pt plus9pt minus13pt}
\newcommand\afterArxivSpace{\vskip3pt plus0.01fil minus10pt}
\newcommand\afterDedicatedSpace{\vskip0pt plus0.01fil}
\newcommand\afterTocSpace{\bigskip\medskip}
\newcommand\afterTocRuleSpace{\bigskip\bigskip}
\newlength{\affiliationsSep}\setlength{\affiliationsSep}{-3pt}
\newcommand\beforetochook{\pagestyle{myplain}\pagenumbering{roman}}
\DeclareFixedFont\trfont{OT1}{phv}{b}{sc}{11}
\renewcommand\maketitle{
%% First page
\pagestyle{empty}
\thispagestyle{titlepage}
\setcounter{page}{0}
\noindent{\small\scshape\@fpheader}\@preprint\par

\afterLogoSpace
% Subheader
\if!\@subheader!\else\noindent{\trfont{\@subheader}}\fi
\afterSubheaderSpace
% Proceedings
\if!\@proceeding!\else\noindent{\sc\@proceeding}\fi
\afterProceedingsSpace
% Title
{\LARGE\flushleft\sffamily\bfseries\@title\par}
\afterTitleSpace
% Rule
\hrule height 1.5\p@%
\afterRuleSpace
% Collaboration
\if!\@collaboration!\else
{\Large\bfseries\sffamily\raggedright\@collaboration}\par
\afterCollaborationSpace
\fi
\if!\@collaborationImg!\else
{\normalsize\bfseries\sffamily\raggedright\@collaborationImg}\par
\afterCollaborationImgSpace
%% I leave the size and font so that if there are two collaboration
%% they can be linked with an 'and'
\fi
% Author
{\bfseries\raggedright\sffamily\the\auth@toks\par}
\afterAuthorSpace
% Affiliation
\ifaffil\begin{list}{}{%
\setlength{\leftmargin}{0.28cm}%
\setlength{\labelsep}{0pt}%
\setlength{\itemsep}{\affiliationsSep}%
\setlength{\topsep}{-\parskip}}
\itshape\small%
\the\affil@toks
\end{list}\fi
\afterAffiliationSpace
% E-mail
\ifemailadd %% if emailadd is true
\noindent\hspace{0.28cm}\begin{minipage}[l]{.9\textwidth}
\begin{flushleft}
\textit{E-mail:} \the\email@toks
\end{flushleft}
\end{minipage}
\else %% if emailaddfalse do nothing
\PackageWarningNoLine{\jname}{E-mails are missing.\MessageBreak Plese use \protect\emailAdd\space macro to provide e-mails.}
\fi
\afterEmailSpace
%Erratum or addendum
\if!\@xtum!\else\noindent{\@xtum}\afterXtumSpace\fi
% Abstract
\if!\@abstract!\else\noindent{\renewcommand\baselinestretch{.9}\textsc{Abstract:}}\ \@abstract\afterAbstractSpace\fi
% Keywords
\if!\@keywords!\else\noindent{\textsc{Keywords:}} \@keywords\afterKeywordsSpace\fi
% Arxivnumber
\if!\@arxivnumber!\else\noindent{\textsc{ArXiv ePrint:}} \href{http://arxiv.org/abs/\@arxivnumber}{\@arxivnumber}\afterArxivSpace\fi
% Dedication
\if!\@dedicated!\else\vbox{\small\it\raggedleft\@dedicated}\afterDedicatedSpace\fi
\ifnotoc\else
\iftoccontinuous\else\newpage\fi
\beforetochook\hrule
\tableofcontents
\afterTocSpace
\hrule
\afterTocRuleSpace
\fi
\setcounter{footnote}{0}
\pagestyle{myplain}\pagenumbering{arabic}
} % close the \renewcommand\maketitle{
\renewcommand{\baselinestretch}{1.1}\normalsize
\renewcommand{\@dotsep}{10000}
\newcommand\ps@myplain{
\pagenumbering{arabic}
\renewcommand\@oddfoot{\hfill-- \thepage\ --\hfill}
\renewcommand\@oddhead{}}
\let\ps@plain=\ps@myplain
\newcommand\ps@titlepage{\renewcommand\@oddfoot{}\renewcommand\@oddhead{}}
\numberwithin{equation}{section}
\renewcommand\section{\@startsection{section}{1}{\z@}%
                                   {-3.5ex \@plus -1.3ex \@minus -.7ex}%
                                   {2.3ex \@plus.4ex \@minus .4ex}%
                                   {\normalfont\large\bfseries}}
\renewcommand\subsection{\@startsection{subsection}{2}{\z@}%
                                   {-2.3ex\@plus -1ex \@minus -.5ex}%
                                   {1.2ex \@plus .3ex \@minus .3ex}%
                                   {\normalfont\normalsize\bfseries}}
\renewcommand\subsubsection{\@startsection{subsubsection}{3}{\z@}%
                                   {-2.3ex\@plus -1ex \@minus -.5ex}%
                                   {1ex \@plus .2ex \@minus .2ex}%
                                   {\normalfont\normalsize\bfseries}}
\renewcommand\paragraph{\@startsection{paragraph}{4}{\z@}%
                                   {1.75ex \@plus1ex \@minus.2ex}%
                                   {-1em}%
                                   {\normalfont\normalsize\bfseries}}
\renewcommand\subparagraph{\@startsection{subparagraph}{5}{\parindent}%
                                   {1.75ex \@plus1ex \@minus .2ex}%
                                   {-1em}%
                                   {\normalfont\normalsize\bfseries}}
\def\fnum@figure{\textbf{\figurename\nobreakspace\thefigure}}
\def\fnum@table{\textbf{\tablename\nobreakspace\thetable}}
\long\def\@makecaption#1#2{%
  \vskip\abovecaptionskip
  \sbox\@tempboxa{\small #1. #2}%
  \ifdim \wd\@tempboxa >\hsize
    \small #1. #2\par
  \else
    \global \@minipagefalse
    \hb@xt@\hsize{\hfil\box\@tempboxa\hfil}%
  \fi
  \vskip\belowcaptionskip}
\renewenvironment{thebibliography}[1]{%
\begin{oldthebibliography}{#1}%
\small%
\raggedright%
\setlength{\itemsep}{5pt plus 0.2ex minus 0.05ex}%
}%
{%
\end{oldthebibliography}%
}
\begin{document}

%%%%%%%%%%%%%%%%%%±êÌâÒ³%%%%%%%%%%%%%%%%%%%%%%%%%%%%%&&&&&&&&&&&&&&&&&&&&&&

\title{\boldmath Heat-kernel approach for scattering}

% more complex case: 4 authors, 3 institutions, 2 footnotes
\author[a]{Wen-Du Li}
\author[a,b,1]{and Wu-Sheng Dai}\note{daiwusheng@tju.edu.cn.}

% The "\note" macro will give a warning: "Ignoring empty anchor..."
% you can safely ignore it.

\affiliation[a]{Department of Physics, Tianjin University, Tianjin 300072, P.R. China}
\affiliation[b]{LiuHui Center for Applied Mathematics, Nankai University \& Tianjin University, Tianjin 300072, P.R. China}
%\affiliation[c]{DP School}

% e-mail addresses: one for each author, in the same order as the authors
%\emailAdd{Ccc@one.edu.cn}
%\emailAdd{second@asas.edu}
%\emailAdd{daiwusheng@tju.edu.cn}
%\emailAdd{fourth@one.univ}

%\title{\boldmath A title with some math: $x=1$}
%% %simple case: 2 authors, same institution
%% \author{A. Uthor}
%% \author{and A. Nother Author}
%% \affiliation{Institution,\\Address, Country}

% more complex case: 4 authors, 3 institutions, 2 footnotes
%\author[a,b,1]{F. Irst,\note{Corresponding author.}}
%\author[c]{S. Econd,}
%\author[a,2]{T. Hird\note{Also at Some University.}}
%\author[a,2]{and Fourth}

% The "\note" macro will give a warning: "Ignoring empty anchor..."
% you can safely ignore it.

%\affiliation[a]{One University,\\some-street, Country}
%\affiliation[b]{Another University,\\different-address, Country}
%\affiliation[c]{A School for Advanced Studies,\\some-location, Country}

% e-mail addresses: one for each author, in the same order as the authors

%\emailAdd{first@one.univ}
%\emailAdd{second@asas.edu}
%\emailAdd{third@one.univ}
%\emailAdd{fourth@one.univ}

\abstract{An approach for solving scattering problems, based on two quantum field theory
methods, the heat kernel method and the scattering spectral method, is
constructed. This approach converts a method of calculating heat kernels into
a method of solving scattering problems. This allows us to establish a method
of scattering problems from a method of heat kernels. As an application, we
construct an approach for solving scattering problems based on the covariant
perturbation theory of heat-kernel expansions. In order to apply the
heat-kernel method to scattering problems, we first calculate the off-diagonal
heat-kernel expansion in the frame of the covariant perturbation theory.
Moreover, as an alternative application of the relation between heat kernels
and partial-wave phase shifts presented in this paper, we give an example of
how to calculate a global heat kernel from a known scattering phase shift.}
%\keywords{***}

\maketitle
\flushbottom
%%%%%%%%%%%%%%%%%%±êÌâÒ³½áÊø%%%%%%%%%%%%%%%%%%%%%%%%%%%%%&&&&&&&&&&&&&&&&&&&

%%%%%%%%%%ÕýÎÄ¿ªÊ¼

\section{Introduction}

In this paper, based on two quantum field theory methods, heat-kernel method
\cite{vassilevich2003heat} and scattering spectral method
\cite{graham2009spectral}, we present a new approach to solve scattering
problems. This approach is a series of approaches for scatterings rather than
a single approach. Concretely, our key result is an explicit relation between
partial-wave scattering phase shifts and heat kernels. By this result, each
method of calculating heat kernels leads to an approach of calculating phase
shifts; or, in other words, the approach converts a method of solving heat
kernels into a method of solving scattering problems. Many methods for
scattering problems can be constructed by this approach, since the heat-kernel
theory is well studied in both mathematics and physics and there are many
mature\ methods for the calculation of heat kernels.

\textit{Phase shift.} All information of an elastic scattering process is
embedded in a scattering phase shift. This can be seen by directly observing
the asymptotic solution of the radial wave equation. For spherically symmetric
cases, the asymptotic solution of the free radial wave equation, $\left[
-\frac{1}{r^{2}}\frac{d}{dr}\left(  r^{2}\frac{d}{dr}\right)  +\frac{l\left(
l+1\right)  }{r^{2}}\right]  R_{l}=k^{2}R_{l}$, is $R_{l}\left(  r\right)
\overset{r\rightarrow\infty}{=}\left(  1/kr\right)  \sin\left(  kr-l\pi
/2\right)  $ and the asymptotic solution of the radial wave equation with a
potential, $\left[  -\frac{1}{r^{2}}\frac{d}{dr}\left(  r^{2}\frac{d}%
{dr}\right)  +\frac{l\left(  l+1\right)  }{r^{2}}+V\left(  r\right)  \right]
R_{l}=k^{2}R_{l}$, is
\begin{equation}
R_{l}\left(  r\right)  \overset{r\rightarrow\infty}{=}\frac{1}{kr}\sin\left[
kr-\frac{l\pi}{2}+\delta_{l}\left(  k\right)  \right]  . \label{Rrinf}%
\end{equation}
This defines the partial-wave phase shift $\delta_{l}\left(  k\right)  $,
which is the only effect on the radial wave function at asymptotic distances
\cite{ballentine1998quantum}. Therefore, all we need to do in solving a
scattering problem is to solve the phase shift $\delta_{l}\left(  k\right)  $.

\textit{Heat kernel. }The information embedded in an operator $D$ can be
extracted from a heat kernel $K\left(  t;\mathbf{r},\mathbf{r}^{\prime
}\right)  $ which is the Green function of the initial-value problem of the
heat-type equation $\left(  \partial_{t}+D\right)  \phi=0$, determined by
\cite{vassilevich2003heat}%
\begin{equation}
\left(  \partial_{t}+D\right)  K\left(  t;\mathbf{r},\mathbf{r}^{\prime
}\right)  =0,\text{ with }K\left(  0;\mathbf{r},\mathbf{r}^{\prime}\right)
=\delta\left(  \mathbf{r}-\mathbf{r}^{\prime}\right)  . \label{hkeq}%
\end{equation}
The global heat kernel $K\left(  t\right)  $ is the trace of the local heat
kernel $K\left(  t;\mathbf{r},\mathbf{r}^{\prime}\right)  $: $K\left(
t\right)  =\int d\mathbf{r}K\left(  t;\mathbf{r},\mathbf{r}\right)
=\sum_{n,l}e^{-\lambda_{nl}t}$, where $\lambda_{nl}$ is the eigenvalue of the
operator $D$.

The main aim of the present paper is to seek a relation between the
partial-wave phase shift $\delta_{l}\left(  k\right)  $ and\ the heat kernel
$K\left(  t;\mathbf{r},\mathbf{r}^{\prime}\right)  $. By this relation, we can
explicitly express a partial-wave phase shift by a given heat kernel. There
are many studies on the approximate calculation of heat kernels
\cite{vassilevich2003heat,culumovic1988calculation,dilkes1996off,mckeon1991seeley,kotani2000albanese,groh2011off,fliegner1994higher,barvinsky1987beyond,barvinsky1990covariant,barvinsky1990covariant3,mukhanov2007introduction}
and each approximate method of heat kernels gives us an approximate method for
calculating partial-wave phase shifts.

The present work is based on our preceding work given in Ref.
\cite{pang2012relation}, which reveals a relation between two quantum field
theory methods, the heat-kernel method \cite{vassilevich2003heat} and the
scattering spectral method \cite{graham2009spectral}. In Ref.
\cite{pang2012relation}, using the relation between spectral counting
functions and heat kernels given by Ref. \cite{dai2009number} and the relation
between phase shifts and state densities given by Ref.
\cite{graham2009spectral}, we provide a relation between the global heat
kernel and the total scattering phase shift (the total scattering phase shift
is the summation of all partial-wave phase shifts, $\delta\left(  k\right)
=\sum_{l}\left(  2l+1\right)  \delta_{l}\left(  k\right)  $).

Nevertheless, the result given by Ref. \cite{pang2012relation} --- the
relation between \textit{total} scattering phase shifts and heat kernels ---
can hardly be applied to scattering problems, since the \textit{total} phase
shift has no clear physical meaning.

To apply the heat-kernel method to scattering problems, we in fact need a
relation between partial-wave phase shifts (rather than \textit{total} phase
shifts) and heat kernels. In the present paper, we find such a relation. This
relation allows us to express a partial-wave phase shift by a known heat
kernel. Then all physical quantities of a scattering process, such as
scattering amplitudes and cross sections, can be expressed by a heat kernel.

To find the\ relation between partial-wave phase shifts and heat kernels, we
will first prove a relation between heat kernels and partial-wave heat
kernels. The heat kernel $K\left(  t;\mathbf{r},\mathbf{r}^{\prime}\right)  $
is the Green function of initial-value problem of the heat equation
(\ref{hkeq}) with the operator $D=-\nabla^{2}+V\left(  r\right)  $ and the
partial-wave heat kernel $K_{l}\left(  t;r,r^{\prime}\right)  $ is the Green
function of initial-value problem of the heat equation (\ref{hkeq}) with the
radial operator $D_{l}=-\frac{1}{r^{2}}\frac{d}{dr}\left(  r^{2}\frac{d}%
{dr}\right)  +\frac{l\left(  l+1\right)  }{r^{2}}+V\left(  r\right)  $. By
this relation, we can calculate a partial-wave heat kernel $K_{l}\left(
t;r,r^{\prime}\right)  $ from a heat kernel $K\left(  t;\mathbf{r}%
,\mathbf{r}^{\prime}\right)  $ directly.

The main aim of this paper is to explicitly express the partial-wave phase
shift by a given heat kernel. As mentioned above, by our result, each method
of calculating heat kernels can be converted to a method of calculating
scattering problems.

In order to calculate a scattering phase shift from a heat kernel, we need
off-diagonal heat kernels (i.e., heat kernels). For this purpose, in the
following, we first calculate an off-diagonal heat-kernel expansion in the
frame of the covariant perturbation theory. It should be pointed out that many
methods on the calculation of diagonal heat-kernel expansions in literature
can be directly apply to the calculation of off-diagonal heat kernels.

A method for calculating scattering phase shifts based on the covariant
perturbation theory in the heat-kernel theory is established as an example of
our approach.

Furthermore, we compare the scattering method established in this paper, which
is based on the covariant perturbation theory of heat kernels, with the Born
approximation. The comparison shows that the scattering method based on the
covariant perturbation theory is a better approximation than the Born approximation.

Besides applying the heat-kernel method to scattering problems, on the other
hand, by the method suggested in the present paper, we can also apply the
scattering method to the heat-kernel theory. In this paper, we provide a
simple example for illustrating how to calculate a heat kernel from a known
scattering result; more details on this subject will be given in a subsequent
work. The value of developing such a method, for example, is that though it is
relatively easy to obtain a high-energy expansion of heat kernels, it is
difficult to obtain a low-energy heat-kernel expansion. With the help of
scattering theory, we can calculate a low-energy heat-kernel expansion from a
low-energy scattering theory.

The starting point of this work, as mentioned above, is a relation between the
heat-kernel method and the scattering spectral method in quantum field theory.
The heat-kernel method is important both in physics and mathematics. In
physics, the heat-kernel method has important applications in, e.g., Euclidean
field theory, gravitation theory, and statistical mechanics
\cite{vassilevich2003heat,mukhanov2007introduction,dai2003quantum,dai2007interacting,dai2005hard}%
. In mathematics, the heat-kernel method is an important basis of the spectral
geometry \cite{vassilevich2003heat,davies1990heat}. There is much research on
the calculation of heat kernels. Besides exact solutions, there are many
systematic studies on the asymptotic expansion of heat kernels
\cite{bytsenko2003analytic}, such as the covariant perturbation theory
\cite{barvinsky1987beyond,barvinsky1990covariant,barvinsky1990covariant3,mukhanov2007introduction}%
. With various heat-kernel expansion techniques, one can obtain many
approximate solutions of heat kernels. Scattering spectral method is an
important quantum theory method which can be used to solve a variety of
problems in quantum field theory, e.g., to characterize the spectrum of energy
eigenstates in a potential background \cite{graham2009spectral} and to solve
the Casimir energy
\cite{graham2002calculating,farhi1998finite,farhi2000heavy,farhi2002searching,rahi2009scattering}%
. The method particular focuses on the property of the quantum vacuum.

In Sect. \ref{relation}, we find a relation between partial-wave phase shifts
and heat kernels. As a key step, we give a relation between partial-wave heat
kernels and heat kernels. In Sect. \ref{covariantperturbationps}, based on the
relation between partial-wave phase shifts and heat kernels given in Sect.
\ref{relation}, we establish an approach for the calculation of partial-wave
phase shifts, based on an heat-kernel expansion, the covariant perturbation
theory. In Sect. \ref{bornapp}, a comparison of the approach established in
the present paper and the Born approximation is given; in particular, we
compare these two methods through an exactly solvable potential.\ In Sect.
\ref{heatkernel}, we give an example for calculating a heat kernel from a
given phase shift. Conclusions and outlook are given in Sect. \ref{conclusion}%
. Moreover, an integral formula and two integral representations are given in
Appendices \ref{legendreintegral} and \ref{representationj2}.

\section{Relation between partial-wave phase shift and heat kernel:
calculating scattering phase shift from heat kernel \label{relation}}

The main result of the present paper is the following theorem which reveals a
relation between partial-wave scattering phase shifts and heat kernels. This
relation allows us to obtain a partial-wave phase shift from a known heat
kernel directly. By this relation, what we can obtain is not only one method
for scattering problems. It is in fact a series of methods for scattering
problems: each heat-kernel method leads to a method for solving scattering problems.

\begin{theorem}
\label{dltK} The relation between the partial-wave scattering phase shift,
$\delta_{l}\left(  k\right)  $, and the heat kernel, $K\left(  t;\mathbf{r}%
,\mathbf{r}^{\prime}\right)  =K\left(  t;r,\theta,\varphi,r^{\prime}%
,\theta^{\prime},\varphi^{\prime}\right)  $, is\
\begin{equation}
\delta_{l}\left(  k\right)  =2\pi^{2}\int_{0}^{\infty}r^{2}dr\int_{-1}%
^{1}d\cos\gamma P_{l}\left(  \cos\gamma\right)  \frac{1}{2\pi i}%
\int_{c-i\infty}^{c+i\infty}dt\frac{e^{k^{2}t}}{t}K^{s}\left(  t;r,\theta
,\varphi,r,\theta^{\prime},\varphi^{\prime}\right)  +\delta_{l}\left(
0\right)  , \label{hkps}%
\end{equation}
where $K^{s}\left(  t;\mathbf{r},\mathbf{r}^{\prime}\right)  $ is the
scattering part of a heat kernel, $P_{l}\left(  \cos\gamma\right)  $ is the
Legendre polynomial, and $\gamma$ is the angle between $\mathbf{r}$ and
$\mathbf{r}^{\prime}$ with $\cos\gamma=\cos\theta\cos\theta^{\prime}%
+\sin\theta\sin\theta^{\prime}\cos\left(  \varphi^{\prime}-\varphi\right)  $.
\end{theorem}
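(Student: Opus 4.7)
The plan is to combine three ingredients: a partial-wave decomposition of the three-dimensional heat kernel in Legendre polynomials, a spectral (Laplace) relation between the trace of each partial-wave heat kernel and $d\delta_l/dk$, and an inverse Laplace identity that recovers $\delta_l(k)-\delta_l(0)$ from that Laplace representation. The combination of the first two steps produces a relation between the angular $P_l$-projection of $K^s(t;\mathbf{r},\mathbf{r}')$ and $\delta_l'(k)$; the third step then integrates this and yields $\delta_l(k)$.

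First I would write, using the addition theorem for spherical harmonics together with the rotational invariance of $D=-\nabla^{2}+V(r)$, the partial-wave expansion
\begin{equation}
K^{s}(t;\mathbf{r},\mathbf{r}')=\sum_{l}\frac{2l+1}{4\pi}P_{l}(\cos\gamma)K_{l}^{s}(t;r,r'),
\end{equation}
which depends on the angles only through $\cos\gamma$. Using the orthogonality $\int_{-1}^{1}P_{l}P_{l'}\,d\cos\gamma=\frac{2}{2l+1}\delta_{ll'}$ one inverts this to obtain
\begin{equation}
K_{l}^{s}(t;r,r')=2\pi\int_{-1}^{1}d\cos\gamma\,P_{l}(\cos\gamma)K^{s}(t;\mathbf{r},\mathbf{r}'),
\end{equation}
which is the ``relation between partial-wave heat kernels and heat kernels'' flagged in the introduction as a key preparatory step.

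Second, I would invoke the scattering-spectral / Krein-Friedel-Lloyd identity applied to the radial operator $D_{l}$, which is the one-dimensional analogue of the total-phase-shift relation proved in Ref.\ \cite{pang2012relation}. This identity gives the change in the partial-wave density of states as $\Delta\rho_{l}(k)=\pi^{-1}\delta_{l}'(k)$, so that the trace of the partial-wave scattering heat kernel is
\begin{equation}
\int_{0}^{\infty}r^{2}dr\,K_{l}^{s}(t;r,r)=\int_{0}^{\infty}e^{-k^{2}t}\,\frac{\delta_{l}'(k)}{\pi}\,dk.
\end{equation}
Third, I would apply the inverse Laplace transform in $t$ at the point $k^{2}$, using the Bromwich identity $\frac{1}{2\pi i}\int_{c-i\infty}^{c+i\infty}\frac{e^{(k^{2}-q^{2})t}}{t}\,dt=\Theta(k^{2}-q^{2})$. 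Exchanging the $q$- and $t$-integrals converts the Laplace transform into a truncated integral and produces
\begin{equation}
\frac{1}{2\pi i}\int_{c-i\infty}^{c+i\infty}\frac{e^{k^{2}t}}{t}\left[\int_{0}^{\infty}r^{2}dr\,K_{l}^{s}(t;r,r)\right]dt=\int_{0}^{k}\frac{\delta_{l}'(q)}{\pi}\,dq=\frac{\delta_{l}(k)-\delta_{l}(0)}{\pi}.
\end{equation}
Substituting the Legendre inversion of step one for $K_{l}^{s}(t;r,r)$ and multiplying through by $\pi$ then produces (\ref{hkps}) with the prefactor $\pi\cdot 2\pi=2\pi^{2}$ and the additive constant $\delta_{l}(0)$.

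The main obstacle will be the rigorous justification of the manipulations rather than the formal algebra: namely, legitimating the Fubini exchange between the Bromwich contour and the radial and spectral integrals, verifying the convergence of $\int_{0}^{\infty}r^{2}dr\,K_{l}^{s}(t;r,r)$ at large $r$ (which requires $V$ to decay sufficiently rapidly so that $K^{s}$ vanishes at infinity), and tracking the endpoint contribution at $k=0$ that produces the boundary term $\delta_{l}(0)$. The latter is effectively a Levinson-type contribution encoding bound states and the zero-energy behaviour of the phase shift, and explains why the inversion cannot recover $\delta_{l}$ up to this additive constant. Finally, the pointwise Legendre decomposition of $K^{s}$ in the first step requires a convergence statement strong enough to justify term-by-term angular projection, which is standard for the heat semigroup of a bounded relatively compact perturbation of the Laplacian but should be recorded as a hypothesis on $V$.
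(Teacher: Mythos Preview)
Your proposal is correct and follows essentially the same route as the paper: the Legendre projection in your first step is exactly the content of the paper's Lemma~\ref{KKl}, and your second and third steps together reproduce the relation $\delta_{l}(k)=\frac{1}{2i}\int_{c-i\infty}^{c+i\infty}\frac{K_{l}^{s}(t)}{t}e^{k^{2}t}dt+\delta_{l}(0)$, which the paper simply quotes from Ref.~\cite{pang2012relation} as Eq.~(\ref{psandpwhk}) rather than rederiving via the density-of-states and Bromwich identities. The only difference is presentational: you unpack the origin of (\ref{psandpwhk}) through the Krein--Friedel--Lloyd relation and the step-function Laplace inversion, whereas the paper treats that formula as an input and concentrates the new content in the angular projection lemma.
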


Notice that only the radial diagonal heat kernel, $K^{s}\left(  t;r,\theta
,\varphi,r,\theta^{\prime},\varphi^{\prime}\right)  $, appears in Eq.
(\ref{hkps}).\ The heat kernel $K\left(  t;\mathbf{r},\mathbf{r}^{\prime
}\right)  $ is split into three parts: $K\left(  t;\mathbf{r},\mathbf{r}%
^{\prime}\right)  =K^{s}\left(  t;\mathbf{r},\mathbf{r}^{\prime}\right)
+K^{b}\left(  t;\mathbf{r},\mathbf{r}^{\prime}\right)  +K^{f}\left(
t;\mathbf{r},\mathbf{r}^{\prime}\right)  $.\ The free part of a heat kernel
$K^{f}\left(  t;\mathbf{r},\mathbf{r}^{\prime}\right)  =$ $K^{\left(
0\right)  }\left(  t;\mathbf{r,r}^{\prime}\right)  $ is the heat kernel of the
operator $D=-\nabla^{2}$; the bound part of a heat kernel corresponds to the
bound state, if exists, of the system, which, in the spectral representation,
is $K^{b}\left(  t;\mathbf{r},\mathbf{r}^{\prime}\right)  =\sum_{\text{bound
states}}e^{-\lambda t}\psi_{\lambda}\left(  \mathbf{r}\right)  \psi_{\lambda
}^{\ast}\left(  \mathbf{r}^{\prime}\right)  $; the scattering part of a heat
kernel corresponds to the scattering state of the system, which, in the
spectral representation, is $K^{s}\left(  t;\mathbf{r},\mathbf{r}^{\prime
}\right)  =\sum_{\text{scattering states}}e^{-\lambda t}\psi_{\lambda}\left(
\mathbf{r}\right)  \psi_{\lambda}^{\ast}\left(  \mathbf{r}^{\prime}\right)  $
\cite{pang2012relation}. Note that $\delta_{l}\left(  0\right)  =\pi/2$ if
there is a half-bound state and $\delta_{l}\left(  0\right)  =0$ if there is
no half-bound state \cite{pang2012relation}.

The remaining task of this section is to prove this theorem. In order to prove
the theorem, we need to first find a relation between partial-wave heat
kernels and heat kernels.

\subsection{Relation between partial-wave heat kernel and heat kernel}

As mentioned above, the heat kernel $K\left(  t;\mathbf{r},\mathbf{r}^{\prime
}\right)  $ of an operator $D$ is determined by the heat equation (\ref{hkeq})
\cite{vassilevich2003heat}. For a spherically symmetric operator $D$, the heat
kernel $K\left(  t;\mathbf{r},\mathbf{r}^{\prime}\right)  =K\left(
t;r,\theta,\varphi,r^{\prime},\theta^{\prime},\varphi^{\prime}\right)  $ can
be expressed as
\begin{equation}
K\left(  t;\mathbf{r},\mathbf{r}^{\prime}\right)  =\sum_{n,l,m}e^{-\lambda
_{nl}t}\psi_{nlm}\left(  \mathbf{r}\right)  \psi_{nlm}^{\ast}\left(
\mathbf{r}^{\prime}\right)  , \label{hkdef}%
\end{equation}
where $\lambda_{nl}$ and $\psi_{nlm}\left(  \mathbf{r}\right)  =R_{nl}\left(
r\right)  Y_{lm}\left(  \theta,\varphi\right)  $ are the eigenvalue and
eigenfunction of $D$, determined by the eigenequation $D\psi_{nlm}%
=\lambda_{nl}\psi_{nlm}$, where $R_{nl}\left(  r\right)  $ is the radial wave
function and $Y_{lm}\left(  \theta,\varphi\right)  $ is the spherical
harmonics. The global heat kernel is the trace of the local heat kernel
$K\left(  t;\mathbf{r},\mathbf{r}^{\prime}\right)  $:%
\begin{equation}
K\left(  t\right)  =\int d\mathbf{r}K\left(  t;\mathbf{r},\mathbf{r}\right)
=\sum_{n,l}e^{-\lambda_{nl}t}.
\end{equation}

The local partial-wave heat kernel%
\begin{equation}
K_{l}\left(  t;r,r^{\prime}\right)  =\sum_{n}e^{-\lambda_{nl}t}R_{nl}\left(
r\right)  R_{nl}\left(  r^{\prime}\right)  \label{pwhkdef}%
\end{equation}
of the operator $D$ is the heat kernel of the $l$-th partial-wave radial
operator \cite{pang2012relation}%
\begin{equation}
D_{l}=-\frac{1}{r^{2}}\frac{d}{dr}\left(  r^{2}\frac{d}{dr}\right)
+\frac{l\left(  l+1\right)  }{r^{2}}+V\left(  r\right)
\end{equation}
which determines the radial equation $D_{l}R_{nl}=\lambda_{nl}R_{nl}$. The
global partial-wave heat kernel is the trace of the local partial-wave heat
kernel $K_{l}\left(  t;r,r^{\prime}\right)  $,
\begin{equation}
K_{l}\left(  t\right)  =\int_{0}^{\infty}r^{2}drK_{l}\left(  t;r,r\right)
=\sum_{n}e^{-\lambda_{nl}t}. \label{KltKltrr}%
\end{equation}

Now we prove that the relation between $K_{l}\left(  t;r,r^{\prime}\right)
$\ and $K\left(  t;\mathbf{r},\mathbf{r}^{\prime}\right)  $ can be expressed
as follows.

\begin{lemma}
\label{KKl}The relation between the partial-wave heat kernel $K_{l}\left(
t;r,r^{\prime}\right)  $ and the heat kernel $K\left(  t;\mathbf{r}%
,\mathbf{r}^{\prime}\right)  =K\left(  t;r,\theta,\varphi,r^{\prime}%
,\theta^{\prime},\varphi^{\prime}\right)  $ is%
\begin{equation}
K_{l}\left(  t;r,r^{\prime}\right)  =2\pi\int_{-1}^{1}d\cos\gamma P_{l}\left(
\cos\gamma\right)  K\left(  t;r,\theta,\varphi,r^{\prime},\theta^{\prime
},\varphi^{\prime}\right)  \label{pwhk-hk}%
\end{equation}
and%
\begin{equation}
K\left(  t;r,\theta,\varphi,r^{\prime},\theta^{\prime},\varphi^{\prime
}\right)  =\frac{1}{4\pi}\sum_{l}\left(  2l+1\right)  P_{l}\left(  \cos
\gamma\right)  K_{l}\left(  t;r,r^{\prime}\right)  . \label{hk-pwhk}%
\end{equation}

\end{lemma}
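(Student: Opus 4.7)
The plan is to derive both identities from the spectral representation (\ref{hkdef}) of the heat kernel, using two standard tools from the theory of spherical harmonics: the addition theorem and the orthogonality of Legendre polynomials. Since the definitions of $K$ and $K_l$ in (\ref{hkdef}) and (\ref{pwhkdef}) share the same radial factors $R_{nl}$ and the same eigenvalues $\lambda_{nl}$, the two formulas (\ref{pwhk-hk}) and (\ref{hk-pwhk}) should essentially be expressing Fourier inversion between the angular variable $\cos\gamma$ and the angular momentum label $l$.

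First, I would prove (\ref{hk-pwhk}). Starting from (\ref{hkdef}) with $\psi_{nlm}(\mathbf{r})=R_{nl}(r)Y_{lm}(\theta,\varphi)$, collect the $m$-sum and apply the addition theorem for spherical harmonics,
\begin{equation*}
\sum_{m=-l}^{l} Y_{lm}(\theta,\varphi)Y_{lm}^{\ast}(\theta^{\prime},\varphi^{\prime}) = \frac{2l+1}{4\pi} P_l(\cos\gamma),
\end{equation*}
where $\cos\gamma$ is exactly the combination given in the statement of Theorem \ref{dltK}. The result reduces (\ref{hkdef}) to
\begin{equation*}
K(t;\mathbf{r},\mathbf{r}^{\prime}) = \frac{1}{4\pi}\sum_{l}(2l+1)P_l(\cos\gamma)\sum_{n} e^{-\lambda_{nl}t}R_{nl}(r)R_{nl}(r^{\prime}),
\end{equation*}
and recognizing the inner sum as $K_l(t;r,r^{\prime})$ by (\ref{pwhkdef}) yields (\ref{hk-pwhk}).

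Next, to obtain (\ref{pwhk-hk}), I would invert this expansion. Multiply both sides of (\ref{hk-pwhk}) by $P_l(\cos\gamma)$ and integrate $\cos\gamma$ from $-1$ to $1$, using the Legendre orthogonality relation
\begin{equation*}
\int_{-1}^{1}P_l(x)P_{l^{\prime}}(x)\,dx = \frac{2}{2l+1}\delta_{ll^{\prime}}.
\end{equation*}
All terms with $l^{\prime}\neq l$ drop out, the factor $(2l^{\prime}+1)$ cancels against $2/(2l+1)$, and one is left with $(1/(2\pi))K_l(t;r,r^{\prime})$ on the right-hand side. Multiplying by $2\pi$ then gives (\ref{pwhk-hk}).

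I do not expect any genuine obstacle here: the argument is essentially a Fourier-type inversion, and the two identities are dual to each other. The only point that requires a little care is the interpretation of the angular integration — namely, that after fixing $\mathbf{r}$, the variable $\cos\gamma$ can be used as an angular coordinate on the sphere of directions of $\mathbf{r}^{\prime}$, so that performing $\int_{-1}^{1}d\cos\gamma$ is equivalent to integrating against $P_l(\cos\gamma)$ on that sphere and absorbing the trivial azimuthal integration into the normalization. Verifying this bookkeeping (and the consistency of the $4\pi$ and $2\pi$ factors) is the only step that deserves explicit attention before declaring the lemma proved.
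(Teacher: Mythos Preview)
Your proposal is correct and follows essentially the same route as the paper: first establish (\ref{hk-pwhk}) from the spectral representation via the addition theorem for spherical harmonics, then invert using Legendre orthogonality to obtain (\ref{pwhk-hk}). The paper's proof is identical in structure and detail, including the same $4\pi$/$2\pi$ bookkeeping you flagged.
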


\begin{proof}
In a scattering with a spherically symmetric potential, the scattering wave
function $\psi_{nlm}\left(  r,\theta,\varphi\right)  =R_{nl}\left(  r\right)
Y_{lm}\left(  \theta,\varphi\right)  $. Then, by Eq. (\ref{hkdef}), the heat
kernel can be expressed as
\begin{equation}
K\left(  t;r,\theta,\varphi,r^{\prime},\theta^{\prime},\varphi^{\prime
}\right)  =\sum_{n,l}e^{-\lambda_{nl}t}R_{nl}\left(  r\right)  R_{nl}\left(
r^{\prime}\right)  \sum_{m=-l}^{l}Y_{lm}\left(  \theta,\varphi\right)
Y_{lm}^{\ast}\left(  \theta^{\prime},\varphi^{\prime}\right)  .
\end{equation}
Using the relation \cite{olver2010nist}
\begin{equation}
\sum_{m=-l}^{l}Y_{lm}\left(  \theta,\varphi\right)  Y_{lm}^{\ast}\left(
\theta^{\prime},\varphi^{\prime}\right)  =\frac{2l+1}{4\pi}P_{l}\left(
\cos\theta\cos\theta^{\prime}+\sin\theta\sin\theta^{\prime}\cos\left(
\varphi^{\prime}-\varphi\right)  \right)  ,
\end{equation}
we obtain%
\begin{equation}
K\left(  t;r,\theta,\varphi,r^{\prime},\theta^{\prime},\varphi^{\prime
}\right)  =\frac{1}{4\pi}\sum_{l}\left(  2l+1\right)  P_{l}\left(  \cos
\gamma\right)  \sum_{n}e^{-\lambda_{nl}t}R_{nl}\left(  r\right)  R_{nl}\left(
r^{\prime}\right)  .
\end{equation}
Then, by Eq. (\ref{pwhkdef}), we prove the relation (\ref{hk-pwhk}).

Multiplying both sides of Eq. (\ref{hk-pwhk}) by $P_{l^{\prime}}\left(
\cos\gamma\right)  $ and then integrating $\cos\gamma$ from $-1$ to $1$ give
\begin{align}
&  \int_{-1}^{1}d\cos\gamma P_{l^{\prime}}\left(  \cos\gamma\right)  K\left(
t;r,\theta,\varphi,r^{\prime},\theta^{\prime},\varphi^{\prime}\right)
\nonumber\\
&  =\frac{1}{4\pi}\sum_{l}\left(  2l+1\right)  \left[  \int_{-1}^{1}%
d\cos\gamma P_{l^{\prime}}\left(  \cos\gamma\right)  P_{l}\left(  \cos
\gamma\right)  \right]  K_{l}\left(  t;r,r^{\prime}\right)  .
\end{align}
Using the orthogonality of the Legendre polynomials \cite{olver2010nist}%
\begin{equation}
\int_{-1}^{1}d\cos\gamma P_{l^{\prime}}\left(  \cos\gamma\right)  P_{l}\left(
\cos\gamma\right)  =\frac{2}{2l^{\prime}+1}\delta_{ll^{\prime}},
\end{equation}
we obtain%
\begin{equation}
\int_{-1}^{1}d\cos\gamma P_{l^{\prime}}\left(  \cos\gamma\right)  K\left(
t;r,\theta,\varphi,r^{\prime},\theta^{\prime},\varphi^{\prime}\right)
=\frac{1}{2\pi}K_{l^{\prime}}\left(  t;r,r^{\prime}\right)  .
\end{equation}
This proves the relation (\ref{pwhk-hk}).
\end{proof}

\subsection{Proof of Theorem \ref{dltK}}

Now, with Lemma \ref{KKl}, we can prove Theorem \ref{dltK}.

\begin{proof}
In Ref. \cite{pang2012relation}, we prove a relation between total phase
shifts and global heat kernels,%
\begin{equation}
\delta\left(  k\right)  =\frac{1}{2i}\int_{c-i\infty}^{c+i\infty}\frac
{K^{s}\left(  t\right)  }{t}e^{k^{2}t}dt+\delta\left(  0\right)  ,
\label{totps}%
\end{equation}
and a relation between partial-wave phase shifts and partial-wave global heat
kernels,%
\begin{equation}
\delta_{l}\left(  k\right)  =\frac{1}{2i}\int_{c-i\infty}^{c+i\infty}%
\frac{K_{l}^{s}\left(  t\right)  }{t}e^{k^{2}t}dt+\delta_{l}\left(  0\right)
. \label{psandpwhk}%
\end{equation}
Here the global heat kernel and the global partial-wave heat kernel are split
into the scattering part, the bound part, and the free part: $K\left(
t\right)  =K^{s}\left(  t\right)  +K^{b}\left(  t\right)  +K^{f}\left(
t\right)  $ and $K_{l}\left(  t\right)  =K_{l}^{s}\left(  t\right)  +K_{l}%
^{b}\left(  t\right)  +K_{l}^{f}\left(  t\right)  $ \cite{pang2012relation}.

Starting from the global partial-wave heat kernel given by Eq. (\ref{KltKltrr}%
) and using\ the relation between partial-wave heat kernels and heat kernels
given by Lemma \ref{KKl}, Eq. (\ref{pwhk-hk}), we have
\begin{align}
K_{l}\left(  t\right)   &  =\int_{0}^{\infty}r^{2}drK_{l}\left(  t;r,r\right)
\nonumber\\
&  =2\pi\int_{0}^{\infty}r^{2}dr\int_{-1}^{1}d\cos\gamma P_{l}\left(
\cos\gamma\right)  K\left(  t;r,\theta,\varphi,r,\theta^{\prime}%
,\varphi^{\prime}\right)  . \label{traceKlt}%
\end{align}
Substituting Eq. (\ref{traceKlt}) into Eq. (\ref{psandpwhk}) proves Theorem
\ref{dltK}.
\end{proof}

It should be noted here that the relations given by Ref.
\cite{pang2012relation}, Eqs. (\ref{totps}) and (\ref{psandpwhk}),\ only allow
one to calculate the \textit{total} phase shift $\delta\left(  k\right)  $
from a heat kernel $K\left(  t\right)  $ or to calculate the partial-wave
phase shift $\delta_{l}\left(  k\right)  $ from a partial-wave heat kernel
$K_{l}\left(  t\right)  $. Such results, however, are not useful in scattering
problems, because the total phase shift $\delta\left(  k\right)  $ is not
physically meaningful and the partial-wave heat kernel $K_{l}\left(  t\right)
$ is often difficult to obtain.

Nevertheless, the result given by Theorem \ref{dltK}, Eq. (\ref{hkps}), allows
one to calculate the partial-wave phase shift $\delta_{l}\left(  k\right)  $
from a heat kernel $K\left(  t\right)  $ rather than a partial-wave heat
kernel $K_{l}\left(  t\right)  $. The heat kernel has been fully studied and
there are many well-known results \cite{vassilevich2003heat}.

\section{Heat-kernel approach for phase shift: covariant perturbation theory
\label{covariantperturbationps}}

In this section, based on the heat-kernel expansion given by the covariant
perturbation theory
\cite{barvinsky1987beyond,barvinsky1990covariant,barvinsky1990covariant3}, by
the relation between partial-wave phase shifts and heat kernels given by Eq.
(\ref{hkps}), we establish an expansion for scattering phase shifts. The
covariant perturbation theory is suitable for our purposes, since it provides
a uniformly convergent expansion of heat kernels
\cite{barvinsky1987beyond,gusev2009heat}.

\textit{The covariant perturbation theory type expansion for a partial-wave
scattering phase shift is }$\delta_{l}\left(  k\right)  =\delta_{l}^{\left(
1\right)  }\left(  k\right)  +\delta_{l}^{\left(  2\right)  }\left(  k\right)
+\cdots$ with%
\begin{align}
\delta_{l}^{\left(  1\right)  }\left(  k\right)   &  =-\frac{\pi}{2}\int%
_{0}^{\infty}rdrV\left(  r\right)  J_{l+1/2}^{2}\left(  kr\right)
,\label{xiebianps1}\\
\delta_{l}^{\left(  2\right)  }\left(  k\right)   &  =-\frac{\pi^{2}}{2}%
\int_{0}^{\infty}rdrJ_{l+1/2}\left(  kr\right)  Y_{l+1/2}\left(  kr\right)
V\left(  r\right)  \int_{0}^{r}r^{\prime}dr^{\prime}J_{l+1/2}^{2}\left(
kr^{\prime}\right)  V\left(  r^{\prime}\right)  , \label{xiebianps2}%
\end{align}
\textit{where }$J_{\nu}\left(  z\right)  $ and $Y_{\nu}\left(  z\right)
$\textit{ are the Bessel functions of the first and second kinds, respectively
\cite{olver2010nist}.}

A detailed calculation is as follows.

\subsection{Covariant perturbation theory for heat-kernel expansion}

The heat-kernel expansion is systematically studied in the covariant
perturbation theory
\cite{barvinsky1987beyond,barvinsky1990covariant,barvinsky1990covariant3}. The
heat-kernel expansion given by covariant perturbation theory reads
\cite{culumovic1988calculation,mckeon1991seeley}
\begin{align}
K\left(  t;\mathbf{r,r}^{\prime}\right)   &  =K^{\left(  0\right)  }\left(
t;\mathbf{r},\mathbf{r}^{\prime}\right)  +K^{\left(  1\right)  }\left(
t;\mathbf{r},\mathbf{r}^{\prime}\right)  +K^{\left(  2\right)  }\left(
t;\mathbf{r,r}^{\prime}\right)  +\cdots\nonumber\\
&  =\left\langle \mathbf{r}\right\vert e^{-H_{0}t}+\left(  -t\right)  \int%
_{0}^{\infty}d\alpha_{1}d\alpha_{2}\delta\left(  1-\alpha_{1}-\alpha
_{2}\right)  e^{-\alpha_{1}H_{0}t}Ve^{-\alpha_{2}H_{0}t}\nonumber\\
&  +\left(  -t\right)  ^{2}\int_{0}^{\infty}d\alpha_{1}d\alpha_{2}d\alpha
_{3}\delta\left(  1-\alpha_{1}-\alpha_{2}-\alpha_{3}\right)  e^{-\alpha
_{1}H_{0}t}Ve^{-\alpha_{2}H_{0}t}Ve^{-\alpha_{3}H_{0}t}+\cdots\left\vert
\mathbf{r}^{\prime}\right\rangle ,\label{hkcpt}%
\end{align}
where%
\begin{equation}
K^{\left(  0\right)  }\left(  t;\mathbf{r,r}^{\prime}\right)  =\left\langle
\mathbf{r}\right\vert e^{-H_{0}t}\left\vert \mathbf{r}^{\prime}\right\rangle
=\frac{1}{\left(  4\pi t\right)  ^{3/2}}e^{-\left\vert \mathbf{r}%
-\mathbf{r}^{\prime}\right\vert ^{2}/\left(  4t\right)  }\label{hkfree}%
\end{equation}
is the zero-order (free) heat kernel. Substituting the zero-order heat kernel
(\ref{hkfree}) into Eq. (\ref{hkcpt}), we obtain the first two orders of a
heat kernel,
\begin{align}
K^{\left(  1\right)  }\left(  t;\mathbf{r},\mathbf{r}^{\prime}\right)   &
=\left\langle \mathbf{r}\right\vert \left(  -t\right)  \int_{0}^{\infty
}d\alpha_{1}d\alpha_{2}\delta\left(  1-\alpha_{1}-\alpha_{2}\right)
e^{-\alpha_{1}H_{0}t}Ve^{-\alpha_{2}H_{0}t}\left\vert \mathbf{r}^{\prime
}\right\rangle \nonumber\\
&  =-\int_{0}^{t}d\tau\int d^{3}yV\left(  \mathbf{y}\right)  \frac{\exp\left(
-\frac{1}{4\left(  t-\tau\right)  }\left\vert \mathbf{r-y}\right\vert
^{2}\right)  }{\left[  4\pi\left(  t-\tau\right)  \right]  ^{3/2}}\frac
{\exp\left(  -\frac{1}{4\tau}\left\vert \mathbf{y-r}^{\prime}\right\vert
^{2}\right)  }{\left(  4\pi\tau\right)  ^{3/2}}\label{k1xiebian}%
\end{align}
and%
\begin{align}
K^{\left(  2\right)  }\left(  t;\mathbf{r,r}^{\prime}\right)   &
=\left\langle \mathbf{r}\right\vert \left(  -t\right)  ^{2}\int_{0}^{\infty
}d\alpha_{1}d\alpha_{2}d\alpha_{3}\delta\left(  1-\alpha_{1}-\alpha_{2}%
-\alpha_{3}\right)  e^{-\alpha_{1}H_{0}t}Ve^{-\alpha_{2}H_{0}t}Ve^{-\alpha
_{3}H_{0}t}\left\vert \mathbf{r}^{\prime}\right\rangle \nonumber\\
&  =\int d^{3}yV\left(  \mathbf{y}\right)  \int d^{3}zV\left(  \mathbf{z}%
\right)  \int_{0}^{t}d\tau\int_{0}^{\tau}d\tau^{\prime}\nonumber\\
&  \times\frac{\exp\left(  -\frac{1}{4\left(  t-\tau\right)  }\left\vert
\mathbf{r}-\mathbf{y}\right\vert ^{2}\right)  }{\left[  4\pi\left(
t-\tau\right)  \right]  ^{3/2}}\frac{\exp\left(  -\frac{1}{4\left(  \tau
-\tau^{\prime}\right)  }\left\vert \mathbf{y}-\mathbf{z}\right\vert
^{2}\right)  }{\left[  4\pi\left(  \tau-\tau^{\prime}\right)  \right]  ^{3/2}%
}\frac{\exp\left(  -\frac{1}{4\tau^{\prime}}\left\vert \mathbf{z}%
-\mathbf{r}^{\prime}\right\vert ^{2}\right)  }{\left(  4\pi\tau^{\prime
}\right)  ^{3/2}}.\label{k2xiebian}%
\end{align}
For the spherical potentials $V\left(  \mathbf{r}\right)  =V\left(  r\right)
$, $K^{\left(  1\right)  }\left(  t;\mathbf{r},\mathbf{r}^{\prime}\right)  $
and $K^{\left(  2\right)  }\left(  t;\mathbf{r,r}^{\prime}\right)  $ given by
Eqs. (\ref{k1xiebian}) and (\ref{k2xiebian}) become
\begin{align}
K^{\left(  1\right)  }\left(  t;r,r^{\prime},\gamma\right)   &  =-\int%
_{0}^{\infty}y^{2}dyV\left(  y\right)  \int d\Omega_{y}\int_{0}^{t}%
d\tau\nonumber\\
&  \times\frac{\exp\left(  -\frac{1}{4\left(  t-\tau\right)  }\left(
r^{2}+y^{2}-2ry\cos\gamma_{\mathbf{ry}}\right)  \right)  }{\left[  4\pi\left(
t-\tau\right)  \right]  ^{3/2}}\frac{\exp\left(  -\frac{1}{4\tau}\left(
r^{\prime2}+y^{2}-2r^{\prime}y\cos\gamma_{\mathbf{r}^{\prime}\mathbf{y}%
}\right)  \right)  }{\left(  4\pi\tau\right)  ^{3/2}}\label{k1xiebiansph}%
\end{align}
and%
\begin{align}
K^{\left(  2\right)  }\left(  t;r,r^{\prime},\gamma\right)   &  =\int%
_{0}^{\infty}y^{2}dyV\left(  y\right)  \int d\Omega_{y}\int_{0}^{\infty}%
z^{2}dzV\left(  z\right)  \nonumber\\
&  \times\int d\Omega_{z}\int_{0}^{t}d\tau\int_{0}^{\tau}d\tau^{\prime}%
\frac{\exp\left(  -\frac{1}{4\left(  t-\tau\right)  }\left(  r^{2}%
+y^{2}-2ry\cos\gamma_{\mathbf{ry}}\right)  \right)  }{\left[  4\pi\left(
t-\tau\right)  \right]  ^{3/2}}\\
&  \times\frac{\exp\left(  -\frac{1}{4\left(  \tau-\tau^{\prime}\right)
}\left(  y^{2}+z^{2}-2yz\cos\gamma_{\mathbf{yz}}\right)  \right)  }{\left[
4\pi\left(  \tau-\tau^{\prime}\right)  \right]  ^{3/2}}\frac{\exp\left(
-\frac{1}{4\tau^{\prime}}\left(  z^{2}+r^{\prime2}-2zr^{\prime}\cos
\gamma_{\mathbf{zr}^{\prime}}\right)  \right)  }{\left(  4\pi\tau^{\prime
}\right)  ^{3/2}},\label{k2xiebiansph}%
\end{align}
where $\gamma$ is the angle between $\mathbf{r}$ and $\mathbf{r}^{\prime}$,
$\gamma_{\mathbf{ry}}$ is the angle between $\mathbf{r}$ and $\mathbf{y}$,
$\gamma_{\mathbf{r}^{\prime}\mathbf{y}}$ is the angle between $\mathbf{r}%
^{\prime}$ and $\mathbf{y}$, $\gamma_{\mathbf{yz}}$ is the angle between
$\mathbf{y}$ and $\mathbf{z}$, and $\gamma_{\mathbf{zr}^{\prime}}$ is the
angle between $\mathbf{z}$ and $\mathbf{r}^{\prime}$.

\subsection{First-order phase shift $\delta_{l}^{\left(  1\right)  }\left(
k\right)  $}

In this section, we calculate the first-order phase shift in the frame of the
covariant perturbation theory.

The first-order phase shift $\delta_{l}^{\left(  1\right)  }\left(  k\right)
$ can be obtained by substituting the first-order heat kernel given by
covariant perturbation theory, Eq. (\ref{k1xiebiansph}), into the relation
between partial-wave phase shifts and heat kernels, Eq. (\ref{hkps}), and
taking $r^{\prime}=r$ (radial diagonal):%
\begin{equation}
\delta_{l}^{\left(  1\right)  }\left(  k\right)  =-2\pi^{2}\int_{0}^{\infty
}r^{2}dr\frac{1}{2\pi i}\int_{c-i\infty}^{c+i\infty}dt\frac{e^{k^{2}t}}{t}%
\int_{0}^{t}d\tau\int_{0}^{\infty}y^{2}dyV\left(  y\right)  \frac{\exp\left(
-\frac{r^{2}+y^{2}}{4\left(  t-\tau\right)  }\right)  }{\left[  4\pi\left(
t-\tau\right)  \right]  ^{3/2}}\frac{\exp\left(  -\frac{r^{2}+y^{2}}{4\tau
}\right)  }{\left(  4\pi\tau\right)  ^{3/2}}\mathcal{I}_{1},\label{deltaIry}%
\end{equation}
where $\mathcal{I}_{1}$ is an integral with respect to the angle,%
\begin{equation}
\mathcal{I}_{1}=\int_{-1}^{1}d\cos\gamma P_{l}\left(  \cos\gamma\right)  \int
d\Omega_{y}\exp\left(  \frac{ry}{2\left(  t-\tau\right)  }\cos\gamma
_{\mathbf{ry}}\right)  \exp\left(  \frac{ry}{2\tau}\cos\gamma_{\mathbf{r}%
^{\prime}\mathbf{y}}\right)  .\label{jiaodujifen1}%
\end{equation}
To calculate $\mathcal{I}_{1}$, we use the expansion \cite{olver2010nist}%
\begin{equation}
e^{iz\cos a}=\sum_{l=0}^{\infty}\left(  2l+1\right)  i^{l}\sqrt{\frac{\pi}%
{2z}}J_{l+1/2}\left(  z\right)  P_{l}\left(  \cos a\right)  \label{eizcosa}%
\end{equation}
to rewrite $\mathcal{I}_{1}$ as%
\begin{align}
\mathcal{I}_{1} &  =\sum_{l_{1}=0}\left(  2l_{1}+1\right)  i^{l_{1}}j_{l_{1}%
}\left(  \frac{ry}{i2\left(  t-\tau\right)  }\right)  \sum_{l_{2}=0}\left(
2l_{2}+1\right)  i^{l_{2}}j_{l_{2}}\left(  \frac{ry}{i2\tau}\right)
\nonumber\\
&  \times\int_{-1}^{1}d\cos\gamma P_{l}\left(  \cos\gamma\right)  \int
d\Omega_{y}P_{l_{1}}\left(  \cos\gamma_{\mathbf{ry}}\right)  P_{l_{2}}\left(
\cos\gamma_{\mathbf{r}^{\prime}\mathbf{y}}\right)  ,
\end{align}
where $j_{\nu}\left(  z\right)  =\sqrt{\pi/\left(  2z\right)  }J_{\nu
+1/2}\left(  z\right)  $ is the spherical Bessel function of the first kind.
Without loss of generality, we choose $\mathbf{r}^{\prime}=\left(  r^{\prime
},0,0\right)  $ and then $\gamma_{\mathbf{r}^{\prime}\mathbf{y}}=\theta_{y}$
and $\gamma=\theta_{r}$. Now, the integral with respect to $\Omega_{y}$ can be
worked out directly by using the integral formula (\ref{Lintegral}) given in
\ref{legendreintegral}:%
\begin{equation}
\int d\Omega_{y}P_{l_{1}}\left(  \cos\gamma_{\mathbf{ry}}\right)  P_{l_{2}%
}\left(  \cos\gamma_{\mathbf{r}^{\prime}\mathbf{y}}\right)  =\int d\Omega
_{y}P_{l_{1}}\left(  \cos\gamma_{\mathbf{ry}}\right)  P_{l_{2}}\left(
\cos\theta_{y}\right)  =P_{l_{1}}\left(  \cos\theta_{r}\right)  \frac{4\pi
}{2l_{1}+1}\delta_{l_{1},l_{2}}.\label{jiaodujifen11}%
\end{equation}
The integral with respect to $\gamma$ ($=\theta_{r}$), then, can also be
worked out by using the orthogonality of the Legendre polynomials $\int%
_{-1}^{1}dxP_{l^{\prime}}\left(  x\right)  P_{l}\left(  x\right)  =2/\left(
2l^{\prime}+1\right)  \delta_{ll^{\prime}}$ \cite{olver2010nist}:
\begin{equation}
\int_{-1}^{1}d\cos\theta_{r}P_{l}\left(  \cos\theta_{r}\right)  P_{l_{1}%
}\left(  \cos\theta_{r}\right)  \frac{4\pi}{2l_{1}+1}\delta_{l_{1},l_{2}%
}=\frac{8\pi}{\left(  2l+1\right)  ^{2}}\delta_{l,l_{1}}\delta_{l_{1},l_{2}%
}.\label{jiaodujifen12}%
\end{equation}
By Eqs. (\ref{jiaodujifen11}) and \ (\ref{jiaodujifen12}), we achieve%
\begin{align}
\mathcal{I}_{1} &  =\sum_{l_{1}=0}\left(  2l_{1}+1\right)  i^{l_{1}}j_{l_{1}%
}\left(  \frac{ry}{i2\left(  t-\tau\right)  }\right)  \sum_{l_{2}=0}\left(
2l_{2}+1\right)  i^{l_{2}}j_{l_{2}}\left(  \frac{ry}{i2\tau}\right)
\frac{8\pi}{\left(  2l+1\right)  ^{2}}\delta_{l,l_{1}}\delta_{l_{1},l_{2}%
}\nonumber\\
&  =8\pi i^{2l}j_{l}\left(  \frac{ry}{i2\left(  t-\tau\right)  }\right)
j_{l}\left(  \frac{ry}{i2\tau}\right)  .\label{Iry}%
\end{align}
Substituting Eq. (\ref{Iry}) into Eq. (\ref{deltaIry}) gives%
\begin{equation}
\delta_{l}^{\left(  1\right)  }\left(  k\right)  =-2\pi^{2}\frac{1}{2\pi
i}\int_{c-i\infty}^{c+i\infty}dt\frac{e^{k^{2}t}}{t}\int_{0}^{t}d\tau\int%
_{0}^{\infty}y^{2}dyV\left(  y\right)  \mathcal{I}_{2},\label{xbps1}%
\end{equation}
where
\begin{equation}
\mathcal{I}_{2}=8\pi i^{2l}\int_{0}^{\infty}r^{2}dr\frac{\exp\left(
-\frac{r^{2}+y^{2}}{4\left(  t-\tau\right)  }\right)  }{\left[  4\pi\left(
t-\tau\right)  \right]  ^{3/2}}\frac{\exp\left(  -\frac{r^{2}+y^{2}}{4\tau
}\right)  }{\left(  4\pi\tau\right)  ^{3/2}}j_{l}\left(  \frac{ry}{i2\left(
t-\tau\right)  }\right)  j_{l}\left(  \frac{ry}{i2\tau}\right)
.\label{Irgamma}%
\end{equation}
To calculate the integral $\mathcal{I}_{2}$, we use the integral
representation, Eq. (\ref{j2}), given in \ref{representationj2} to represent
the factor $j_{l}\left(  \frac{ry}{i2\left(  t-\tau\right)  }\right)
j_{l}\left(  \frac{ry}{i2\tau}\right)  $ as%
\begin{equation}
j_{l}\left(  \frac{ry}{i2\left(  t-\tau\right)  }\right)  j_{l}\left(
\frac{ry}{i2\tau}\right)  =\frac{1}{2}\int_{-1}^{1}d\cos\theta\frac{\sin
\sqrt{\left[  \frac{ry}{i2\left(  t-\tau\right)  }\right]  ^{2}+\left(
\frac{ry}{i2\tau}\right)  ^{2}-2\frac{ry}{i2\left(  t-\tau\right)  }\frac
{ry}{i2\tau}\cos\theta}}{\sqrt{\left[  \frac{ry}{i2\left(  t-\tau\right)
}\right]  ^{2}+\left(  \frac{ry}{i2\tau}\right)  ^{2}-2\frac{ry}{i2\left(
t-\tau\right)  }\frac{ry}{i2\tau}\cos\theta}}P_{l}\left(  \cos\theta\right)
.\label{doublej}%
\end{equation}
Substituting the integral representation (\ref{doublej}) into Eq.
(\ref{Irgamma}) and working out the integral give%
\begin{align}
\mathcal{I}_{2} &  =4\pi i^{2l}\int_{-1}^{1}d\cos\theta P_{l}\left(
\cos\theta\right)  \int_{0}^{\infty}r^{2}dr\frac{\exp\left(  -\frac
{r^{2}+y^{2}}{4\left(  t-\tau\right)  }\right)  }{\left[  4\pi\left(
t-\tau\right)  \right]  ^{3/2}}\frac{\exp\left(  -\frac{r^{2}+y^{2}}{4\tau
}\right)  }{\left(  4\pi\tau\right)  ^{3/2}}\nonumber\\
&  \times\frac{\sin\sqrt{\left[  \frac{ry}{i2\left(  t-\tau\right)  }\right]
^{2}+\left(  \frac{ry}{i2\tau}\right)  ^{2}-2\frac{ry}{i2\left(
t-\tau\right)  }\frac{ry}{i2\tau}\cos\theta}}{\sqrt{\left[  \frac
{ry}{i2\left(  t-\tau\right)  }\right]  ^{2}+\left(  \frac{ry}{i2\tau}\right)
^{2}-2\frac{ry}{i2\left(  t-\tau\right)  }\frac{ry}{i2\tau}\cos\theta}}\\
&  =\frac{i^{2l}}{\left(  4\pi t\right)  ^{3/2}}\int_{-1}^{1}d\cos\theta
P_{l}\left(  \cos\theta\right)  \exp\left(  -\frac{y^{2}}{2t}\left(
\cos\theta+1\right)  \right)  .\label{Irgamma1}%
\end{align}
Substituting Eq. (\ref{Irgamma1}) into Eq. (\ref{xbps1}) and performing the
integral with respect to $\tau$, we have%
\begin{equation}
\delta_{l}^{\left(  1\right)  }\left(  k\right)  =-\frac{\sqrt{\pi}}{4}%
i^{2l}\int_{0}^{\infty}y^{2}dyV\left(  y\right)  \frac{1}{2\pi i}%
\int_{c-i\infty}^{c+i\infty}dt\frac{e^{k^{2}t}}{t}\frac{1}{\sqrt{t}}%
e^{-y^{2}/\left(  2t\right)  }\int_{-1}^{1}d\cos\theta P_{l}\left(  \cos
\theta\right)  \exp\left(  -\frac{y^{2}\cos\theta}{2t}\right)
.\label{deltaintgamma}%
\end{equation}
Using the expansion $\exp\left(  -y^{2}\cos\theta/\left(  2t\right)  \right)
=\sum_{l=0}\left(  2l+1\right)  i^{l}j_{l}\left(  i\frac{y^{2}}{2t}\right)
P_{l}\left(  \cos\theta\right)  $ (see Eq. (\ref{eizcosa})) and the
orthogonality of the Legendre polynomials, we can work out the integral:%
\begin{align}
\int_{-1}^{1}d\cos\theta P_{l}\left(  \cos\theta\right)  \exp\left(
-\frac{y^{2}\cos\theta}{2t}\right)   &  =\sum_{l^{\prime}=0}\left(
2l^{\prime}+1\right)  i^{l^{\prime}}j_{l^{\prime}}\left(  i\frac{y^{2}}%
{2t}\right)  \int_{-1}^{1}d\cos\theta P_{l}\left(  \cos\theta\right)
P_{l^{\prime}}\left(  \cos\theta\right)  \nonumber\\
&  =2i^{l}j_{l}\left(  i\frac{y^{2}}{2t}\right)  =i^{2l}\frac{2\sqrt{\pi t}%
}{y}I_{l+1/2}\left(  \frac{y^{2}}{2t}\right)  ,\label{intgamma}%
\end{align}
where $I_{v}\left(  z\right)  $ is the modified Bessel function of the first
kind and the relation $j_{\nu}\left(  z\right)  =\sqrt{\pi/\left(  2z\right)
}i^{l}I_{\nu+1/2}\left(  z\right)  $ is used. Substituting Eq. (\ref{intgamma}%
) into Eq. (\ref{deltaintgamma}), we have%
\begin{equation}
\delta_{l}^{\left(  1\right)  }\left(  k\right)  =-\frac{\pi}{2}\int%
_{0}^{\infty}ydyV\left(  y\right)  \frac{1}{2\pi i}\int_{c-i\infty}%
^{c+i\infty}dt\frac{e^{k^{2}t}}{t}e^{-y^{2}/\left(  2t\right)  }%
I_{l+1/2}\left(  \frac{y^{2}}{2t}\right)  .\label{xbps2}%
\end{equation}
Finally, by performing the inverse Laplace transformation in Eq.
(\ref{xbps2}),
\begin{equation}
\frac{1}{2\pi i}\int_{c-i\infty}^{c+i\infty}dt\frac{e^{k^{2}t}}{t}%
e^{-r^{2}/\left(  2t\right)  }I_{l+1/2}\left(  \frac{r^{2}}{2t}\right)
=J_{l+1/2}^{2}\left(  kr\right)  ,
\end{equation}
the first-order phase shift given by covariant perturbation theory, Eq.
(\ref{xiebianps1}), is obtained.

\subsection{Second-order phase shift $\delta_{l}^{\left(  2\right)  }\left(
k\right)  $}

In this section, we calculate the second-order phase shift in the frame of the
covariant perturbation theory.

The second-order phase shift $\delta_{l}^{\left(  2\right)  }\left(  k\right)
$ can be obtained by substituting the second-order heat kernel given by
covariant perturbation theory, Eq. (\ref{k2xiebiansph}), into the relation
between partial-wave phase shifts and heat kernels, Eq. (\ref{hkps}), and
taking $r^{\prime}=r$:%
\begin{align}
\delta_{l}^{\left(  2\right)  }\left(  k\right)   &  =2\pi^{2}\int_{0}%
^{\infty}r^{2}dr\frac{1}{2\pi i}\int_{c-i\infty}^{c+i\infty}dt\frac{e^{k^{2}%
t}}{t}\int_{0}^{t}d\tau\int_{0}^{\tau}d\tau^{\prime}\nonumber\\
&  \times\int_{0}^{\infty}y^{2}dyV\left(  y\right)  \int_{0}^{\infty}%
z^{2}dzV\left(  z\right)  \frac{\exp\left(  -\frac{r^{2}+y^{2}}{4\left(
t-\tau\right)  }\right)  }{\left[  4\pi\left(  t-\tau\right)  \right]  ^{3/2}%
}\frac{\exp\left(  -\frac{y^{2}+z^{2}}{4\left(  \tau-\tau^{\prime}\right)
}\right)  }{\left[  4\pi\left(  \tau-\tau^{\prime}\right)  \right]  ^{3/2}%
}\frac{\exp\left(  -\frac{z^{2}+r^{2}}{4\tau^{\prime}}\right)  }{\left(
4\pi\tau^{\prime}\right)  ^{3/2}}\mathcal{I}_{3},\label{deltaIryz}%
\end{align}
where%
\begin{equation}
\mathcal{I}_{3}=\int_{-1}^{1}d\cos\gamma P_{l}\left(  \cos\gamma\right)  \int
d\Omega_{y}\int d\Omega_{z}\exp\left(  \frac{ry\cos\gamma_{\mathbf{ry}}%
}{2\left(  t-\tau\right)  }\right)  \exp\left(  \frac{yz\cos\gamma
_{\mathbf{yz}}}{2\left(  \tau-\tau^{\prime}\right)  }\right)  \exp\left(
\frac{zr\cos\gamma_{\mathbf{zr}^{\prime}}}{2\tau^{\prime}}\right)  .
\end{equation}
Using Eq. (\ref{eizcosa}), we rewrite the integral $\mathcal{I}_{3}$ as%
\begin{align}
\mathcal{I}_{3} &  =\sum_{l_{1}=0}\left(  2l_{1}+1\right)  i^{l_{1}}j_{l_{1}%
}\left(  \frac{ry}{i2\left(  t-\tau\right)  }\right)  \sum_{l_{2}=0}\left(
2l_{2}+1\right)  i^{l_{2}}j_{l_{2}}\left(  \frac{yz}{i2\left(  \tau
-\tau^{\prime}\right)  }\right)  \sum_{l_{3}=0}\left(  2l_{3}+1\right)
i^{l_{3}}j_{l_{3}}\left(  \frac{zr}{i2\tau^{\prime}}\right)  \nonumber\\
&  \times\int_{-1}^{1}d\cos\gamma P_{l}\left(  \cos\gamma\right)  \int
d\Omega_{y}\int d\Omega_{z}P_{l_{1}}\left(  \cos\gamma_{\mathbf{ry}}\right)
P_{l_{2}}\left(  \cos\gamma_{\mathbf{yz}}\right)  P_{l_{3}}\left(  \cos
\gamma_{\mathbf{zr}^{\prime}}\right)  .\label{Iryz}%
\end{align}
Without loss of generality, we choose $\mathbf{r}^{\prime}=\left(  r^{\prime
},0,0\right)  $ and then we have $\gamma_{\mathbf{zr}^{\prime}}=\theta_{z}$.
The integral with respect to $\Omega_{z}$ can then be worked out by use of the
integral formula, Eq. (\ref{Lintegral}), given in Appendix
\ref{legendreintegral}:%
\begin{equation}
\int d\Omega_{z}P_{l_{2}}\left(  \cos\gamma_{\mathbf{yz}}\right)  P_{l_{3}%
}\left(  \cos\gamma_{\mathbf{zr}^{\prime}}\right)  =\int d\Omega_{z}P_{l_{2}%
}\left(  \cos\gamma_{\mathbf{yz}}\right)  P_{l_{3}}\left(  \cos\theta
_{z}\right)  =P_{l_{2}}\left(  \cos\theta_{y}\right)  \frac{4\pi}{2l_{2}%
+1}\delta_{l_{2},l_{3}},\label{intIryz1}%
\end{equation}
The integral with respect to $\Omega_{y}$ also can be integrated directly by
Eq. (\ref{Lintegral}),
\begin{equation}
\int d\Omega_{y}P_{l_{1}}\left(  \cos\gamma_{\mathbf{ry}}\right)  P_{l_{2}%
}\left(  \cos\theta_{y}\right)  \frac{4\pi}{2l_{2}+1}\delta_{l_{2},l_{3}%
}=P_{l_{1}}\left(  \cos\theta_{r}\right)  \frac{4\pi}{2l_{1}+1}\delta
_{l_{1},l_{2}}\frac{4\pi}{2l_{2}+1}\delta_{l_{2},l_{3}}.\label{intIryz2}%
\end{equation}
Then, performing the integral with respect to $\gamma$ ($\gamma=\theta_{r}$
when $\mathbf{r}^{\prime}=\left(  r^{\prime},0,0\right)  $) in Eq.
(\ref{Iryz}), we have%
\begin{equation}
\frac{4\pi}{2l_{1}+1}\delta_{l_{1},l_{2}}\frac{4\pi}{2l_{2}+1}\delta
_{l_{2},l_{3}}\int_{-1}^{1}d\cos\theta_{r}P_{l}\left(  \cos\theta_{r}\right)
P_{l_{1}}\left(  \cos\theta_{r}\right)  =\frac{32\pi^{2}}{\left(  2l+1\right)
^{3}}\delta_{l,l_{1}}\delta_{l_{1},l_{2}}\delta_{l_{2},l_{3}}.\label{intIryz3}%
\end{equation}
By Eqs. (\ref{intIryz1}), (\ref{intIryz2}), and (\ref{intIryz3}), we have
\begin{equation}
\mathcal{I}_{3}=32\pi^{2}i^{3l}j_{l}\left(  \frac{ry}{i2\left(  t-\tau\right)
}\right)  j_{l}\left(  \frac{yz}{i2\left(  \tau-\tau^{\prime}\right)
}\right)  j_{l}\left(  \frac{zr}{i2\tau^{\prime}}\right)  .\label{intIryz}%
\end{equation}
Substituting Eq. (\ref{intIryz}) into Eq. (\ref{deltaIryz}) gives
\begin{align}
\delta_{l}^{\left(  2\right)  }\left(  k\right)   &  =64\pi^{4}i^{3l}\frac
{1}{2\pi i}\int_{c-i\infty}^{c+i\infty}dt\frac{e^{k^{2}t}}{t}\int_{0}^{t}%
d\tau\int_{0}^{\tau}d\tau^{\prime}\int_{0}^{\infty}y^{2}dyV\left(  y\right)
\nonumber\\
&  \times\int_{0}^{\infty}z^{2}dzV\left(  z\right)  \frac{\exp\left(
-\frac{y^{2}+z^{2}}{4\left(  \tau-\tau^{\prime}\right)  }\right)  }{\left[
4\pi\left(  \tau-\tau^{\prime}\right)  \right]  ^{3/2}}j_{l}\left(  \frac
{yz}{i2\left(  \tau-\tau^{\prime}\right)  }\right)  \\
&  \times\int_{0}^{\infty}r^{2}dr\frac{\exp\left(  -\frac{r^{2}+y^{2}%
}{4\left(  t-\tau\right)  }\right)  }{\left[  4\pi\left(  t-\tau\right)
\right]  ^{3/2}}\frac{\exp\left(  -\frac{z^{2}+r^{2}}{4\tau^{\prime}}\right)
}{\left(  4\pi\tau^{\prime}\right)  ^{3/2}}j_{l}\left(  \frac{ry}{i2\left(
t-\tau\right)  }\right)  j_{l}\left(  \frac{zr}{i2\tau^{\prime}}\right)
.\label{ps2zhongjian}%
\end{align}
To perform the integral with respect to $r$, by using the integral
representation (\ref{j2}) given in \ref{representationj2}, we rewrite%
\begin{equation}
j_{l}\left(  \frac{ry}{i2\left(  t-\tau\right)  }\right)  j_{l}\left(
\frac{zr}{i2\tau^{\prime}}\right)  =\frac{1}{2}\int_{-1}^{1}d\cos\theta
\frac{\sin\sqrt{\left[  \frac{ry}{i2\left(  t-\tau\right)  }\right]
^{2}+\left(  \frac{zr}{i2\tau^{\prime}}\right)  ^{2}-2\frac{ry}{i2\left(
t-\tau\right)  }\frac{zr}{i2\tau^{\prime}}\cos\theta}}{\sqrt{\left[  \frac
{ry}{i2\left(  t-\tau\right)  }\right]  ^{2}+\left(  \frac{zr}{i2\tau^{\prime
}}\right)  ^{2}-2\frac{ry}{i2\left(  t-\tau\right)  }\frac{zr}{i2\tau^{\prime
}}\cos\theta}}P_{l}\left(  \cos\theta\right)  .
\end{equation}
Then, the integral with respect to $r$ can be worked out,%
\begin{align}
&  \int_{0}^{\infty}r^{2}dr\frac{\exp\left(  -\frac{r^{2}+y^{2}}{4\left(
t-\tau\right)  }\right)  }{\left[  4\pi\left(  t-\tau\right)  \right]  ^{3/2}%
}\frac{\exp\left(  -\frac{z^{2}+r^{2}}{4\tau^{\prime}}\right)  }{\left(
4\pi\tau^{\prime}\right)  ^{3/2}}j_{l}\left(  \frac{ry}{i2\left(
t-\tau\right)  }\right)  j_{l}\left(  \frac{zr}{i2\tau^{\prime}}\right)
\nonumber\\
&  =\frac{1}{2}\int_{-1}^{1}d\cos\theta P_{l}\left(  \cos\theta\right)
\int_{0}^{\infty}r^{2}dr\frac{\exp\left(  -\frac{r^{2}+y^{2}}{4\left(
t-\tau\right)  }\right)  }{\left[  4\pi\left(  t-\tau\right)  \right]  ^{3/2}%
}\frac{\exp\left(  -\frac{z^{2}+r^{2}}{4\tau^{\prime}}\right)  }{\left(
4\pi\tau^{\prime}\right)  ^{3/2}}\nonumber\\
&  \times\frac{\sin\sqrt{\left[  \frac{ry}{i2\left(  t-\tau\right)  }\right]
^{2}+\left(  \frac{zr}{i2\tau^{\prime}}\right)  ^{2}-2\frac{ry}{i2\left(
t-\tau\right)  }\frac{zr}{i2\tau^{\prime}}\cos\theta}}{\sqrt{\left[  \frac
{ry}{i2\left(  t-\tau\right)  }\right]  ^{2}+\left(  \frac{zr}{i2\tau^{\prime
}}\right)  ^{2}-2\frac{ry}{i2\left(  t-\tau\right)  }\frac{zr}{i2\tau^{\prime
}}\cos\theta}}\\
&  =\frac{1}{8\pi}\int_{-1}^{1}d\cos\theta P_{l}\left(  \cos\theta\right)
\frac{1}{\left[  4\pi\left(  t-\tau+\tau^{\prime}\right)  \right]  ^{3/2}}%
\exp\left(  -\frac{y^{2}+z^{2}+2yz\cos\theta}{4\left(  t-\tau+\text{$\tau$%
}^{\prime}\right)  }\right)  .\label{integralr}%
\end{align}
Substituting Eq. (\ref{integralr}) into Eq. (\ref{ps2zhongjian}), we have%
\begin{align}
\delta_{l}^{\left(  2\right)  }\left(  k\right)   &  =8\pi^{3}i^{3l}\frac
{1}{2\pi i}\int_{c-i\infty}^{c+i\infty}dt\frac{e^{k^{2}t}}{t}\int_{0}^{t}%
d\tau\int_{0}^{\tau}d\tau^{\prime}\int_{0}^{\infty}y^{2}dyV\left(  y\right)
\nonumber\\
&  \times\int_{0}^{\infty}z^{2}dzV\left(  z\right)  \frac{\exp\left(
-\frac{y^{2}+z^{2}}{4\left(  \tau-\tau^{\prime}\right)  }\right)  }{\left[
4\pi\left(  \tau-\tau^{\prime}\right)  \right]  ^{3/2}}j_{l}\left(  \frac
{yz}{i2\left(  \tau-\tau^{\prime}\right)  }\right)  \\
&  \times\frac{\exp\left(  -\frac{y^{2}+z^{2}}{4\left(  t-\tau+\text{$\tau$%
}^{\prime}\right)  }\right)  }{\left[  4\pi\left(  t-\tau+\tau^{\prime
}\right)  \right]  ^{3/2}}\int_{-1}^{1}d\cos\theta P_{l}\left(  \cos
\theta\right)  \exp\left(  -\frac{yz\cos\theta}{2\left(  t-\tau+\text{$\tau$%
}^{\prime}\right)  }\right)  .\label{deltagamma}%
\end{align}
Using the expansion (\ref{eizcosa}) and the orthogonality of the Legendre
polynomials, we have
\begin{align}
&  \int_{-1}^{1}d\cos\theta P_{l}\left(  \cos\theta\right)  \exp\left(
-\frac{yz\cos\theta}{2\left(  t-\tau+\text{$\tau$}^{\prime}\right)  }\right)
\nonumber\\
&  =\sum_{l^{\prime}=0}\left(  2l^{\prime}+1\right)  i^{l^{\prime}%
}j_{l^{\prime}}\left(  -\frac{yz}{i2\left(  t-\tau+\text{$\tau$}^{\prime
}\right)  }\right)  \int_{-1}^{1}d\cos\theta P_{l}\left(  \cos\theta\right)
P_{l^{\prime}}\left(  \cos\theta\right)  \\
&  =2i^{l}j_{l}\left(  -\frac{yz}{i2\left(  t-\tau+\text{$\tau$}^{\prime
}\right)  }\right)  .\label{integralexp}%
\end{align}
Substituting Eq. (\ref{integralexp}) into Eq. (\ref{deltagamma}) and setting
$T=\tau-\tau^{\prime}$, we have%
\begin{align}
\delta_{l}^{\left(  2\right)  }\left(  k\right)   &  =16\pi^{3}\frac{1}{2\pi
i}\int_{c-i\infty}^{c+i\infty}dt\frac{e^{k^{2}t}}{t}\int_{0}^{\infty}%
y^{2}dyV\left(  y\right)  \int_{0}^{\infty}z^{2}dzV\left(  z\right)
\nonumber\\
&  \times\int_{0}^{t}d\tau\int_{0}^{\tau}dT\frac{\exp\left(  -\frac
{y^{2}+z^{2}}{4T}\right)  }{\left(  4\pi T\right)  ^{3/2}}\frac{\exp\left(
-\frac{y^{2}+z^{2}}{4\left(  t-T\right)  }\right)  }{\left[  4\pi\left(
t-T\right)  \right]  ^{3/2}}j_{l}\left(  \frac{yz}{i2T}\right)  j_{l}\left(
-\frac{yz}{i2\left(  t-T\right)  }\right)  .\label{deltacpt2}%
\end{align}
Exchanging the order of integrals $\int_{0}^{t}d\tau\int_{0}^{\tau
}dT\rightarrow\int_{0}^{t}dT\int_{T}^{t}d\tau$ and resetting $T=\tau^{\prime}$
give%
\begin{align}
\delta_{l}^{\left(  2\right)  }\left(  k\right)   &  =16\pi^{3}\frac{1}{2\pi
i}\int_{c-i\infty}^{c+i\infty}dt\frac{e^{k^{2}t}}{t}\int_{0}^{\infty}%
y^{2}dyV\left(  y\right)  \int_{0}^{\infty}z^{2}dzV\left(  z\right)
\nonumber\\
&  \times\int_{0}^{t}d\tau^{\prime}\int_{\tau^{\prime}}^{t}d\tau\frac
{\exp\left(  -\frac{y^{2}+z^{2}}{4\tau^{\prime}}\right)  }{\left(  4\pi
\tau^{\prime}\right)  ^{3/2}}\frac{\exp\left(  -\frac{y^{2}+z^{2}}{4\left(
t-\tau^{\prime}\right)  }\right)  }{\left[  4\pi\left(  t-\tau^{\prime
}\right)  \right]  ^{3/2}}j_{l}\left(  \frac{yz}{i2\tau^{\prime}}\right)
j_{l}\left(  -\frac{yz}{i2\left(  t-\tau^{\prime}\right)  }\right)  .
\end{align}
Integrating with respect to $\tau$, we have%
\begin{equation}
\delta_{l}^{\left(  2\right)  }\left(  k\right)  =\frac{1}{4}\int_{0}^{\infty
}y^{2}dyV\left(  y\right)  \int_{0}^{\infty}z^{2}dzV\left(  z\right)  \frac
{1}{2\pi i}\int_{c-i\infty}^{c+i\infty}dte^{k^{2}t}\mathcal{I}_{4}%
,\label{deltaLaplace}%
\end{equation}
where
\begin{align}
\mathcal{I}_{4} &  =\frac{1}{t}\int_{0}^{t}d\tau^{\prime}\frac{\exp\left(
-\frac{y^{2}+z^{2}}{4\left(  t-\tau^{\prime}\right)  }\right)  }{\left(
t-\tau^{\prime}\right)  ^{1/2}}\frac{\exp\left(  -\frac{y^{2}+z^{2}}%
{4\tau^{\prime}}\right)  }{\tau^{\prime3/2}}j_{l}\left(  -\frac{yz}{i2\left(
t-\tau^{\prime}\right)  }\right)  j_{l}\left(  \frac{yz}{i2\tau^{\prime}%
}\right)  \nonumber\\
&  =-8\int_{0}^{\infty}kdke^{-k^{2}t}%
%TCIMACRO{\QDATOPD{\{}{.}{k^{2}j_{l}\left(  ky\right)  n_{l}\left(  ky\right)
%j_{l}^{2}\left(  kz\right)  ,\text{ \ \ }y>z}{k^{2}j_{l}^{2}\left(  ky\right)
%j_{l}\left(  kz\right)  n_{l}\left(  kz\right)  ,\text{ \ \ }y<z}}%
%BeginExpansion
\genfrac{\{}{.}{0pt}{0}{k^{2}j_{l}\left(  ky\right)  n_{l}\left(  ky\right)
j_{l}^{2}\left(  kz\right)  ,\text{ \ \ }y>z}{k^{2}j_{l}^{2}\left(  ky\right)
j_{l}\left(  kz\right)  n_{l}\left(  kz\right)  ,\text{ \ \ }y<z}%
%EndExpansion
,
\end{align}
where $n_{l}\left(  z\right)  $ is the spherical Bessel function of the second
kind. Thus, the inverse Laplace transformation of $\mathcal{I}_{4}$ can be
worked out:%
\begin{equation}
\frac{1}{2\pi i}\int_{c-i\infty}^{c+i\infty}dte^{k^{2}t}\mathcal{I}_{4}=-4%
%TCIMACRO{\QDATOPD{\{}{.}{k^{2}j_{l}\left(  ky\right)  n_{l}\left(  ky\right)
%j_{l}^{2}\left(  kz\right)  ,\text{ \ \ }y>z}{k^{2}j_{l}^{2}\left(  ky\right)
%j_{l}\left(  kz\right)  n_{l}\left(  kz\right)  ,\text{ \ \ }y<z}}%
%BeginExpansion
\genfrac{\{}{.}{0pt}{0}{k^{2}j_{l}\left(  ky\right)  n_{l}\left(  ky\right)
j_{l}^{2}\left(  kz\right)  ,\text{ \ \ }y>z}{k^{2}j_{l}^{2}\left(  ky\right)
j_{l}\left(  kz\right)  n_{l}\left(  kz\right)  ,\text{ \ \ }y<z}%
%EndExpansion
.\label{Laplace}%
\end{equation}
Substituting Eq. (\ref{Laplace}) into Eq. (\ref{deltaLaplace}), we have%
\begin{align}
\delta_{l}^{\left(  2\right)  }\left(  k\right)   &  =-k^{2}\int_{0}^{\infty
}y^{2}dyj_{l}\left(  ky\right)  n_{l}\left(  ky\right)  V\left(  y\right)
\int_{0}^{y}z^{2}dzj_{l}^{2}\left(  kz\right)  V\left(  z\right)  \nonumber\\
&  -k^{2}\int_{0}^{\infty}y^{2}dyj_{l}^{2}\left(  ky\right)  V\left(
y\right)  \int_{y}^{\infty}z^{2}dzj_{l}\left(  kz\right)  n_{l}\left(
kz\right)  V\left(  z\right)  .\label{detal2twopart}%
\end{align}
By exchanging the order of integrals, $\int_{0}^{\infty}dy\int_{y}^{\infty
}dz\rightarrow\int_{0}^{\infty}dz\int_{0}^{z}dy$, we rewrite Eq.
(\ref{detal2twopart}) as
\begin{align}
\delta_{l}^{\left(  2\right)  }\left(  k\right)   &  =-k^{2}\int_{0}^{\infty
}y^{2}dyj_{l}\left(  ky\right)  n_{l}\left(  ky\right)  V\left(  y\right)
\int_{0}^{y}z^{2}dzj_{l}^{2}\left(  kz\right)  V\left(  z\right)  \nonumber\\
&  -k^{2}\int_{0}^{\infty}z^{2}dzj_{l}\left(  kz\right)  n_{l}\left(
kz\right)  V\left(  z\right)  \int_{0}^{z}y^{2}dyj_{l}^{2}\left(  ky\right)
V\left(  y\right)  .\label{detal2twopart2}%
\end{align}
Obviously, the two parts in Eq. (\ref{detal2twopart2}) are equal. Using
$j_{l}\left(  z\right)  =\sqrt{\pi/\left(  2z\right)  }J_{l+1/2}\left(
z\right)  $ and $n_{l}\left(  z\right)  =\sqrt{\pi/\left(  2z\right)
}Y_{l+1/2}\left(  z\right)  $ gives Eq. (\ref{xiebianps2}).

\section{Comparison with Born approximation \label{bornapp}}

The approach for scattering problems established in the present paper is to
convert a method of calculating heat kernels into a method of solving
scattering problems. As an application, in Sect. \ref{covariantperturbationps}%
, we suggest a method for the scattering phase shift, based on the covariant
perturbation theory of heat kernels.

In scattering theory, there are many approximation methods, such as the Born
approximation, the WKB method, the eikonal approximation, and the variational
method \cite{joachain1975quantum}.

In this section, we compare our method with the Born approximation.

\subsection{Comparison of first-order contribution}

For clarity, we list the result given by the above section in the following.

The first-order phase shift given by covariant perturbation theory given in
Sect. \ref{covariantperturbationps} reads%
\begin{equation}
\delta_{l}^{\left(  1\right)  }\left(  k\right)  _{\text{cpt}}=-\frac{\pi}%
{2}\int_{0}^{\infty}rdrJ_{l+1/2}^{2}\left(  kr\right)  V\left(  r\right)  .
\label{psCP1}%
\end{equation}
For comparison, the first-order phase shift given by the Born approximation
reads \cite{joachain1975quantum}%
\begin{equation}
\delta_{l}^{\left(  1\right)  }\left(  k\right)  _{\text{Born}}=\arctan\left[
-\frac{\pi}{2}\int_{0}^{\infty}rdrV\left(  r\right)  J_{l+1/2}^{2}\left(
kr\right)  \right]  \simeq-\frac{\pi}{2}\int_{0}^{\infty}rdrJ_{l+1/2}%
^{2}\left(  kr\right)  V\left(  r\right)  +\cdots. \label{psB1}%
\end{equation}

Obviously, the leading contributions of these two methods are the same (in the
Born approximation, the first-order contribution is in fact $\arctan\left[
-\left(  \pi/2\right)  \int_{0}^{\infty}rdrV\left(  r\right)  J_{l+1/2}%
^{2}\left(  kr\right)  \right]  $, but the higher contribution can be safely
ignored in the first-order contribution).

\subsection{Comparison of second-order contribution}

The second-order phase shift given by covariant perturbation theory given in
Sect. \ref{covariantperturbationps} [Eqs. (\ref{xiebianps2}) and
(\ref{detal2twopart2})] reads%
\begin{align}
\delta_{l}^{\left(  2\right)  }\left(  k\right)  _{\text{cpt}}  &  =-\frac
{\pi^{2}}{2}\int_{0}^{\infty}rdrJ_{l+1/2}\left(  kr\right)  Y_{l+1/2}\left(
kr\right)  V\left(  r\right)  \int_{0}^{r}r^{\prime}dr^{\prime}J_{l+1/2}%
^{2}\left(  kr^{\prime}\right)  V\left(  r^{\prime}\right) \nonumber\\
&  =-\frac{\pi^{2}}{4}\int_{0}^{\infty}rdrJ_{l+1/2}\left(  kr\right)
Y_{l+1/2}\left(  kr\right)  V\left(  r\right)  \int_{0}^{r}r^{\prime
}dr^{\prime}J_{l+1/2}^{2}\left(  kr^{\prime}\right)  V\left(  r^{\prime
}\right) \nonumber\\
&  -\frac{\pi^{2}}{4}\int_{0}^{\infty}rdrJ_{l+1/2}^{2}\left(  kr\right)
V\left(  r\right)  \int_{r}^{\infty}r^{\prime}dr^{\prime}J_{l+1/2}\left(
kr^{\prime}\right)  Y_{l+1/2}\left(  kr^{\prime}\right)  V\left(  r^{\prime
}\right)  . \label{psCP2}%
\end{align}
The second-order phase shift given by the Born approximation
\cite{joachain1975quantum} reads%
\begin{align}
\delta_{l}^{\left(  2\right)  }\left(  k\right)  _{\text{Born}}  &
=\arctan\left[  -\frac{\pi^{2}}{4}\int_{0}^{\infty}rdrJ_{l+1/2}\left(
kr\right)  Y_{l+1/2}\left(  kr\right)  V\left(  r\right)  \int_{0}%
^{r}r^{\prime}dr^{\prime}J_{l+1/2}^{2}\left(  kr^{\prime}\right)  V\left(
r^{\prime}\right)  \right. \nonumber\\
&  -\left.  \frac{\pi^{2}}{4}\int_{0}^{\infty}rdrJ_{l+1/2}^{2}\left(
kr\right)  V\left(  r\right)  \int_{r}^{\infty}r^{\prime}dr^{\prime}%
J_{l+1/2}\left(  kr^{\prime}\right)  Y_{l+1/2}\left(  kr^{\prime}\right)
V\left(  r^{\prime}\right)  \right] \nonumber\\
&  \simeq-\frac{\pi^{2}}{4}\int_{0}^{\infty}rdrJ_{l+1/2}\left(  kr\right)
Y_{l+1/2}\left(  kr\right)  V\left(  r\right)  \int_{0}^{r}r^{\prime
}dr^{\prime}J_{l+1/2}^{2}\left(  kr^{\prime}\right)  V\left(  r^{\prime
}\right) \nonumber\\
&  -\frac{\pi^{2}}{4}\int_{0}^{\infty}rdrJ_{l+1/2}^{2}\left(  kr\right)
V\left(  r\right)  \int_{r}^{\infty}r^{\prime}dr^{\prime}J_{l+1/2}\left(
kr^{\prime}\right)  Y_{l+1/2}\left(  kr^{\prime}\right)  V\left(  r^{\prime
}\right)  +\cdots. \label{psB2}%
\end{align}

It can be directly seen that the leading contribution of the second-order Born
approximation and the leading contribution of the second-order covariant
perturbation theory are the same.

\subsection{Comparison through an exactly solvable potential£º$V\left(
r\right)  =\alpha/r^{2}$ \label{example}}

In this section, we compare the two methods, the covariant perturbation theory
method, and the Born approximation, through an exactly solvable potential:%
\begin{equation}
V\left(  r\right)  =\frac{\alpha}{r^{2}}. \label{rquar}%
\end{equation}
Using these two approximation methods to calculate an exactly solvable
potential can help us to compare them intuitively.

The phase shift for the potential (\ref{rquar}) can be solved exactly,
\begin{equation}
\delta_{l}=-\frac{\pi}{2}\left[  \sqrt{\left(  l+\frac{1}{2}\right)
^{2}+\alpha}-\left(  l+\frac{1}{2}\right)  \right]  . \label{deltarsquare}%
\end{equation}
In order to compare the methods term by term, we expand the exact result
(\ref{deltarsquare}) as $\delta_{l}=\delta_{l}^{\left(  1\right)  }+\delta
_{l}^{\left(  2\right)  }+\cdots$, where%
\begin{align}
\delta_{l}^{\left(  1\right)  }  &  =-\frac{\pi\alpha}{2(2l+1)}%
,\label{deltarsquareexpand1}\\
\delta_{l}^{\left(  2\right)  }  &  =\frac{\pi\alpha^{2}}{2(2l+1)^{3}}.
\label{deltarsquareexpand2}%
\end{align}

\textit{First order:} The first-order contribution given by covariant
perturbation theory and the Born approximation can be directly obtained by
substituting the potential (\ref{rquar}) into Eqs. (\ref{xiebianps1}) and
(\ref{psB1}), respectively:
\begin{align}
\delta_{l}^{\left(  1\right)  }\left(  k\right)  _{\text{cpt}}  &  =-\frac
{\pi\alpha}{2(2l+1)},\\
\delta_{l}^{\left(  1\right)  }\left(  k\right)  _{\text{Born}}  &
=\arctan\left[  -\frac{\pi\alpha}{2(2l+1)}\right]  \simeq-\frac{\pi\alpha
}{2(2l+1)}-\frac{1}{3}\left[  -\frac{\pi\alpha}{2(2l+1)}\right]  ^{3}.
\end{align}

Comparing with the direct expansion of the exact solution, Eqs.
(\ref{deltarsquareexpand1}) and (\ref{deltarsquareexpand2}), we can see that
both results are good approximations, and the result given by covariant
perturbation theory is better than the result given by the Born approximation.

\textit{Second order:} The second-order contribution given by covariant
perturbation theory and the Born approximation can be directly obtained by
substituting the potential (\ref{rquar}) into Eqs. (\ref{xiebianps2}) and
(\ref{psB2}), respectively:%
\begin{align}
\delta_{l}^{\left(  2\right)  }\left(  k\right)  _{\text{cpt}}  &  =\frac
{\pi\alpha^{2}}{2(2l+1)^{3}},\\
\delta_{l}^{\left(  2\right)  }\left(  k\right)  _{\text{Born}}  &
=\arctan\left[  \frac{\pi\alpha^{2}}{2(2l+1)^{3}}\right]  \simeq\frac
{\pi\alpha^{2}}{2(2l+1)^{3}}-\frac{1}{3}\left[  \frac{\pi\alpha^{2}%
}{2(2l+1)^{3}}\right]  ^{3}.
\end{align}

Comparing with the second-order contribution, Eq. (\ref{deltarsquareexpand2}%
),\ we can see that, like that in the case of first-order contributions, the
result given by covariant perturbation theory is better.

\section{Calculating global heat kernel from phase shift \label{heatkernel}}

The key result of this paper is a relation between partial-wave phase shifts
and heat kernels. Besides solving a scattering problem from a known heat
kernel, obviously, we can also calculate a heat kernel from a known phase
shift. Here, we only give a simple example with the potential $\alpha/r^{2}$.
A systematic discussion of how to calculate heat kernels and other spectral
functions, such as one-loop effective actions, vacuum energies, and spectral
counting functions, from a solved scattering problem will be given elsewhere.

For the potential%
\begin{equation}
V\left(  r\right)  =\frac{\alpha}{r^{2}},
\end{equation}
the exact partial-wave phase shift is given by Eq. (\ref{deltarsquare}),
\begin{equation}
\delta_{l}=-\frac{\pi}{2}\left[  \sqrt{\left(  l+\frac{1}{2}\right)
^{2}+\alpha}-\left(  l+\frac{1}{2}\right)  \right]  .
\end{equation}

By the relation between a global heat kernel and a scattering phase shift
given by Ref. \cite{pang2012relation},
\begin{equation}
K_{l}^{s}\left(  t\right)  =\frac{2}{\pi}t\int_{0}^{\infty}kdk\delta
_{l}\left(  k\right)  e^{-k^{2}t}-\frac{\delta_{l}\left(  0\right)  }{\pi},
\label{pwhkandps}%
\end{equation}
we can calculate the scattering part of the global heat kernel immediately,
\begin{equation}
K_{l}^{s}\left(  t\right)  =-\frac{1}{2}\left[  \sqrt{\alpha+\left(
l+\frac{1}{2}\right)  ^{2}}-\left(  l+\frac{1}{2}\right)  \right]  .
\end{equation}
In this case, the bound part of heat kernel $K_{l}^{b}\left(  t\right)  =0$
and the free part of heat kernel $K_{l}^{f}\left(  t\right)  =R/\sqrt{4\pi
t}-\frac{1}{2}\left(  l+\frac{1}{2}\right)  $, where $R$ is the radius of the
system. The global partial-wave heat kernel then reads
\begin{align}
K_{l}\left(  t\right)   &  =K_{l}^{s}\left(  t\right)  +K_{l}^{b}\left(
t\right)  +K_{l}^{f}\left(  t\right) \nonumber\\
&  =\frac{R}{\sqrt{4\pi t}}-\frac{1}{2}\sqrt{\alpha+\left(  l+\frac{1}%
{2}\right)  ^{2}}. \label{Kltr-2}%
\end{align}

As a comparison, we calculate the partial-wave heat kernel for $V\left(
r\right)  =\alpha/r^{2}$ by another approach.

The partial-wave heat kernel of a free particle, $K_{l}^{f}$, which is the
heat kernel of the radial operator $D^{free}=-\frac{1}{r^{2}}\frac{d}%
{dr}\left(  r^{2}\frac{d}{dr}\right)  +\frac{l\left(  l+1\right)  }{r^{2}}$,
can be calculated directly:
\begin{equation}
K_{l}^{f}\left(  t;r,r^{\prime}\right)  =\frac{1}{2t\sqrt{rr^{\prime}}}%
\exp\left(  -\frac{r^{2}+r^{\prime2}}{4t}\right)  I_{l+1/2}\left(
\frac{rr^{\prime}}{2t}\right)  .
\end{equation}
By setting $\frac{s\left(  s+1\right)  }{r^{2}}=\frac{l\left(  l+1\right)
}{r^{2}}+\frac{\alpha}{r^{2}}$, where $s=\sqrt{\alpha+\left(  l+1/2\right)
^{2}}-1/2$, we can obtain the partial-wave heat kernel of operator
$D_{l}=-\frac{1}{r^{2}}\frac{d}{dr}\left(  r^{2}\frac{d}{dr}\right)
+\frac{l\left(  l+1\right)  }{r^{2}}+\frac{\alpha}{r^{2}}$,
\begin{equation}
K_{l}\left(  t;r,r^{\prime}\right)  =\frac{1}{2t\sqrt{rr^{\prime}}}\exp\left(
-\frac{r^{2}+r^{\prime2}}{4t}\right)  I_{\sqrt{\alpha+\left(  l+1/2\right)
^{2}}}\left(  \frac{rr^{\prime}}{2t}\right)  .
\end{equation}
Taking trace of $K_{l}\left(  t;r,r^{\prime}\right)  $ gives the global
partial-wave heat kernel $K_{l}\left(  t\right)  $:%
\begin{align}
K_{l}\left(  t\right)   &  =\int_{0}^{R}r^{2}drK_{l}\left(  t;r,r\right)
\nonumber\\
&  =\frac{R^{2\left(  1+\eta\right)  }}{\left(  4t\right)  ^{1+\eta}%
\Gamma\left(  2+\eta\right)  }\text{ }_{2}F_{2}\left(  \eta+\frac{1}{2}%
,\eta+1;\eta+2,2\eta+1;-\frac{R^{2}}{t}\right)  ,
\end{align}
where $\eta=\sqrt{(l+1/2)^{2}+\alpha}$, $_{p}F_{q}\left(  a_{1},a_{2}\cdots
a_{p};b_{1},b_{2}\cdots b_{q};z\right)  $ is the generalized hypergeometric
function \cite{olver2010nist}. Expanding $K_{l}\left(  t\right)  $ at
$R\rightarrow\infty$ gives%
\begin{equation}
K_{l}\left(  t\right)  =\frac{R}{\sqrt{4\pi t}}-\frac{1}{2}\eta+\sqrt{t}%
\frac{2\eta^{2}-\frac{1}{2}+ie^{-R^{2}/t-i\pi\eta}}{4\sqrt{\pi}R}+\cdots.
\end{equation}
When $R\rightarrow\infty$, one recovers the heat kernel given by Eq.
(\ref{Kltr-2}).

\section{Conclusions and outlook \label{conclusion}}

In this paper, based on two quantum field theory methods, the heat kernel
method \cite{vassilevich2003heat} and the scattering spectral method
\cite{graham2009spectral}, we suggest an approach for calculating the
scattering phase shift. The method suggested in the present paper is indeed a
series of different methods of calculating scattering phase shifts constructed
from various heat kernel methods.

The key step is to find a relation between partial-wave phase shifts and heat
kernels. This relation allows us to express a partial-wave phase shift by a
heat kernel. Then, each method of the calculation of heat kernels can be
converted to a method of the calculation of phase shifts.

As an application, we provide a method for the calculation of phase shifts
based on the covariant perturbation theory of heat kernels.

Furthermore, as emphasized above, by this approach, we can construct various
methods for scattering problems with the help of various heat kernel methods.
In subsequent works, we shall construct various scattering methods by using
various heat-kernel expansions.

In this paper, as a byproduct, we also provide an off-diagonal heat-kernel
expansion based on the technique developed in the covariant perturbation
theory for diagonal heat kernels, since the heat kernel method for scatterings
established in the present paper is based on the off-diagonal heat kernel
rather than the diagonal heat kernel. It should be emphasized that many
methods for calculating diagonal heat kernels can be directly applied to the
calculation of off-diagonal heat kernels. That is to say, the method for
calculating the diagonal heat kernel often can also be converted to a method
for calculating off-diagonal heat kernels and scattering phase shifts, as we
have done in the present paper. Therefore, we can construct scattering methods
from many methods of diagonal heat kernels, e.g.,
\cite{vassilevich2003heat,fliegner1994higher,nepomechie1985calculating}.

The heat kernel theory is well studied in both mathematics and physics. Here,
as examples, we list some methods on the calculation of heat kernel. In Refs.
\cite{mcavity1991dewitt,branson1999heat,mcavity1991asymptotic}, the authors
calculate the heat-kernel coefficient with different boundary conditions. In
Ref. \cite{van1985explicit}, using the background field method, the author
calculates the fourth and fifth heat-kernel coefficients. In Refs.
\cite{avramidi1989background,avramidi1990covariant,avramidi1991covariant}, the
authors calculate the third coefficient by the covariant technique. In Refs.
\cite{fliegner1994higher,fliegner1998higher}, by a string-inspired worldline
path-integral method, the authors calculate the first seven heat-kernel
coefficients. In Ref. \cite{nepomechie1985calculating}, a direct, nonrecursive
method for the calculation of heat kernels is presented. In Ref.
\cite{van1998index}, the first five heat-kernel coefficients for a general
Laplace-type operator on a compact Riemannian space without boundary by the
index-free notation are given. In Refs.
\cite{barvinsky1987beyond,barvinsky1990covariant,barvinsky1990covariant3,barvinsky1994basis,barvinsky1995one,barvinsky1994asymptotic,mukhanov2007introduction,gusev2009heat,codello2013non,shore2002local}%
, a covariant perturbation theory which yields a uniformly convergent
expansion of heat kernels is established. In Refs.
\cite{gusynin1989new,gusynin1990seeley,gusynin1991heat}, a covariant
pseudo-differential-operator method for calculating heat-kernel expansions in
an arbitrary space dimension is given.

An important application of the method given by this paper is to solve various
spectral functions by a scattering method. The problem of spectral functions
is an important issue in quantum field theory
\cite{dai2009number,dai2010approach,iochum2012spectral}. A subsequent work on
this subject is a systematic discussion of calculating heat kernels, effective
actions, vacuum energies, etc., from a known phase shift. We will show that,
based on scattering methods, we can obtain some new heat-kernel expansions. It
is known that though there are many discussions on the high energy heat-kernel
expansion, the low-energy expansion of heat kernels is relatively difficult to
obtain. While there are some successful low-energy scattering theories, by
using the relation given in this paper we can directly obtain some low-energy
results for heat kernels.

Starting from the result given by the present paper, we can study many
problems. The method presented in this paper can be applied to low-dimensional
scatterings. One-\ and two-dimensional scatterings and their applications have
been thoroughly studied, such as the transport property of low-dimensional
materials \cite{mostafazadeh2014dynamical,mei2003theory,jena2000dislocation}.
We will also consider a systematic application of our method to relativistic
scattering. The relativistic scattering is an important problem, e.g., the
collision of solitons in relativistic scalar field theories
\cite{amin2013scattering} and the Dirac scattering in the problem of the
electron properties of graphene \cite{novikov2007elastic,neto2009electronic}.
We can also apply the method to low-temperature physics. There are many
scattering problems in low-temperature physics, such as the scattering in the
problem of the transition temperature of BEC
\cite{arnold2001bec,kastening2004bose} and the transport property of
spin-polarized fermions at low temperature
\cite{mineev2004transverse,mineev2005theory}.

The application of the method to inverse scattering problems is an important
subject of our subsequent work. The inverse scattering problem has extreme
significance in physics \cite{sabatier2000past,faddeyev1963inverse}. In
practice, for example, the inverse scattering method can be applied to the
problem of BEC \cite{liu2000nonlinear} and the Aharonov--Bohm effect
\cite{nicoleau2000inverse}.

In Ref. \cite{dai2010approach}, we provide a method for solving the spectral
function, such as one-loop effective actions, vacuum energies, and spectral
counting functions in quantum field theory. The key idea is to construct the
equations obeyed by these quantities. We show that, for example, the equation
of the one-loop effective action is a partial integro-differential equation.
By the relation between partial-wave phase shifts and heat kernel, we can also
construct an equation obeyed by phase shifts.

Moreover, in conventional scattering theory, an approximate large-distance
asymptotics is used to seek an explicit result. In Ref.
\cite{liu2014scattering}, we show that such an approximate treatment is not
necessary: without the large-distance asymptotics, one can still rigorously
obtain an explicit result. The result presented in this paper can be directly
applied to the scattering theory without large-distance asymptotics.

%\begin{acknowledgments}
%We are very indebted to Dr G. Zeitrauman for his encouragement. This work is supported
%in part by NSF of China under Grant No. 11075115.
%\end{acknowledgments}

\appendix

%\section{Some title}
%Please always give a title also for appendices.

\section{$\protect\int d\Omega^{\prime}P_{l}\left(  \cos\gamma\right)
P_{l^{\prime}}\left(  \cos\theta^{\prime}\right)  $ \label{legendreintegral}}

In this appendix, we provide an integral formula:
\begin{equation}
\int d\Omega^{\prime}P_{l}\left(  \cos\gamma\right)  P_{l^{\prime}}\left(
\cos\theta^{\prime}\right)  =\frac{4\pi}{2l+1}P_{l}\left(  \cos\theta\right)
\delta_{ll^{\prime}}, \label{Lintegral}%
\end{equation}
where $\gamma$ is the angle between $\mathbf{r=}\left(  r,\theta,\phi\right)
$ and $\mathbf{r}^{\prime}=\left(  r^{\prime},\theta^{\prime},\phi^{\prime
}\right)  $ and $d\Omega^{\prime}=\sin\theta^{\prime}d\theta^{\prime}%
d\phi^{\prime}$.

\begin{proof}
Using the integral formula \cite{flugge1994practical}%
\begin{equation}
\int d\Omega^{\prime}Y_{l0}\left(  \gamma\right)  Y_{l^{\prime}0}\left(
\theta^{\prime}\right)  =\sqrt{\frac{4\pi}{2l+1}}Y_{l0}\left(  \theta\right)
\delta_{ll^{\prime}}%
\end{equation}
and the relation $Y_{l0}\left(  \theta,\phi\right)  =\sqrt{\left(
2l+1\right)  /\left(  4\pi\right)  }P_{l}\left(  \cos\theta\right)  $, we have%
\begin{align}
\int d\Omega^{\prime}Y_{l0}\left(  \gamma\right)  Y_{l^{\prime}0}\left(
\theta^{\prime}\right)   &  =\sqrt{\frac{2l+1}{4\pi}}\sqrt{\frac{2l^{\prime
}+1}{4\pi}}\int d\Omega^{\prime}P_{l}\left(  \cos\gamma\right)  P_{l^{\prime}%
}\left(  \cos\theta^{\prime}\right)  \nonumber\\
&  =P_{l}\left(  \cos\theta\right)  \delta_{ll^{\prime}}.
\end{align}
This proves Eq. (\ref{Lintegral}).
\end{proof}

\section{Integral representations of $j_{l}\left(  u\right)  j_{l}\left(
v\right)  $ and $j_{l}\left(  u\right)  n_{l}\left(  v\right)  $
\label{representationj2}}

In this appendix, we provide two integral representations for the product of
two spherical Bessel functions $j_{l}\left(  u\right)  j_{l}\left(  v\right)
$ and $j_{l}\left(  u\right)  n_{l}\left(  v\right)  $.%
\begin{equation}
j_{l}\left(  u\right)  j_{l}\left(  v\right)  =\frac{1}{2}\int_{-1}^{1}%
d\cos\theta\frac{\sin w}{w}P_{l}\left(  \cos\theta\right)  , \label{j2}%
\end{equation}
where $w=\sqrt{u^{2}+v^{2}-2uv\cos\theta}$ and $l$ is an integer.

\begin{proof}
Using the expansion \cite{watson1944theory}
\begin{equation}
\frac{\sin w}{w}=\sum_{l=0}^{\infty}\left(  2l+1\right)  j_{l}\left(
u\right)  j_{l}\left(  v\right)  P_{l}\left(  \cos\theta\right)
,\label{besselj}%
\end{equation}
where $u=\left\vert \mathbf{u}\right\vert $ and $v=\left\vert \mathbf{v}%
\right\vert $ with $\theta$ the angle between $\mathbf{u}$ and $\mathbf{v}$.
Multiplying both sides of (\ref{besselj}) by $P_{l^{\prime}}\left(  \cos
\theta\right)  $ and integrating from $0$ to $\pi$ give
\begin{equation}
\int_{-1}^{1}d\cos\theta\frac{\sin w}{w}P_{l^{\prime}}\left(  \cos
\theta\right)  =\sum_{l=0}^{\infty}\int_{-1}^{1}d\cos\theta\left(
2l+1\right)  j_{l}\left(  u\right)  j_{l}\left(  v\right)  P_{l}\left(
\cos\theta\right)  P_{l^{\prime}}\left(  \cos\theta\right)  =2j_{l^{\prime}%
}\left(  u\right)  j_{l^{\prime}}\left(  v\right)  .
\end{equation}
Here, the orthogonality, $\int_{-1}^{1}d\cos\theta P_{l}\left(  \cos
\theta\right)  P_{l^{\prime}}\left(  \cos\theta\right)  =2/\left(
2l+1\right)  \delta_{ll^{\prime}}$, is used. This proves Eq. (\ref{j2}).
\end{proof}

\begin{equation}
j_{l}\left(  u\right)  n_{l}\left(  v\right)  =-\frac{1}{2}\int_{-1}^{1}%
d\cos\theta\frac{\cos w}{w}P_{l}\left(  \cos\theta\right)  ,\text{
\ \ }u<v,\label{jn}%
\end{equation}
where $w=\sqrt{u^{2}+v^{2}-2uv\cos\theta}$.

\begin{proof}
Using the expansion \cite{watson1944theory}%
\begin{equation}
\frac{\cos w}{w}=-\sum_{l=0}^{\infty}\left(  2l+1\right)  j_{l}\left(
u\right)  n_{l}\left(  v\right)  P_{l}\left(  \cos\theta\right)  ,\text{
\ \ }u<v,\label{besseljn2}%
\end{equation}
where $u=\left\vert \mathbf{u}\right\vert $ and $v=\left\vert \mathbf{v}%
\right\vert $ with $\theta$ as the angle between $\mathbf{u}$ and $\mathbf{v}%
$. Multiplying both sides of (\ref{besseljn2}) by $P_{l^{\prime}}\left(
\cos\theta\right)  $ and integrating from $0$ to $\pi$ give%
\begin{align}
\int_{-1}^{1}d\cos\theta\frac{\cos w}{w}P_{l^{\prime}}\left(  \cos
\theta\right)   &  =-\sum_{l=0}^{\infty}\left(  2l+1\right)  j_{l}\left(
u\right)  n_{l}\left(  v\right)  \int_{-1}^{1}d\cos\theta P_{l}\left(
\cos\theta\right)  P_{l^{\prime}}\left(  \cos\theta\right)  \nonumber\\
&  =-\sum_{l=0}^{\infty}\left(  2l+1\right)  j_{l}\left(  u\right)
n_{l}\left(  v\right)  \frac{2}{2l+1}\delta_{ll^{\prime}}\nonumber\\
&  =-2j_{l^{\prime}}\left(  u\right)  n_{l^{\prime}}\left(  v\right)  .
\end{align}
This proves Eq. (\ref{jn}).
\end{proof}

%\appendix
%\section{Some title}
%Please always give a title also for appendices.

\acknowledgments

We are very indebted to Dr G. Zeitrauman for his encouragement. This work is
supported in part by NSF of China under Grant No. 11075115.

%\acknowledgments%ÖÂл
%%%%%%%%%%ÕýÎĽáÊø

%\begin{thebibliography}{99}

%\end{thebibliography}\endgroup

%\bibitem{a}
%Author, \emph{Title}, \emph{J. Abbrev.} {\bf vol} (year) pg.

%\bibitem{b}
%Author, \emph{Title},
%arxiv:1234.5678.

%\bibitem{c}
%Author, \emph{Title},
%Publisher (year).

% Please avoid comments such as "For a review'', "For some examples",
% "and references therein" or move them in the text. In general,
% please leave only references in the bibliography and move all
% accessory text in footnotes.

% Also, please have only one work for each \bibitem.

%\end{thebibliography}

\providecommand{\href}[2]{#2}\begingroup\raggedright\endgroup

%\bibliographystyle{JHEP}
%\bibliography{hkandps}% Produces the bibliography via BibTeX.

\begin{thebibliography}{10}


\bibitem{vassilevich2003heat}
D.~V. Vassilevich, {\it Heat kernel expansion: user's manual},  {\em Physics
  Reports} {\bf 388} (2003), no.~5 279--360.

\bibitem{graham2009spectral}
N.~Graham, M.~Quandt, and H.~Weigel, {\em Spectral methods in quantum field
  theory}, vol.~777.
\newblock Springer, 2009.

\bibitem{ballentine1998quantum}
L.~E. Ballentine, {\em Quantum Mechanics: a modern development}.
\newblock World Scientific Publishing Company, 1998.

\bibitem{culumovic1988calculation}
L.~Culumovic and D.~McKeon, {\it Calculation of off-diagonal elements of the
  heat kernel},  {\em Physical Review D} {\bf 38} (1988), no.~12 3831.

\bibitem{dilkes1996off}
F.~Dilkes and D.~McKeon, {\it Off-diagonal elements of the dewitt expansion
  from the quantum-mechanical path integral},  {\em Physical Review D} {\bf 53}
  (1996), no.~8 4388.

\bibitem{mckeon1991seeley}
D.~McKeon, {\it Seeley-gilkey coefficients for superoperators},  {\em Modern
  Physics Letters A} {\bf 6} (1991), no.~40 3711--3715.

\bibitem{kotani2000albanese}
M.~Kotani and T.~Sunada, {\it Albanese maps and off diagonal long time
  asymptotics for the heat kernel},  {\em Communications in Mathematical
  Physics} {\bf 209} (2000), no.~3 633--670.

\bibitem{groh2011off}
K.~Groh, F.~Saueressig, and O.~Zanusso, {\it Off-diagonal heat-kernel expansion
  and its application to fields with differential constraints},  {\em arXiv
  preprint arXiv:1112.4856} (2011).

\bibitem{fliegner1994higher}
D.~Fliegner, M.~G. Schmidt, and C.~Schubert, {\it The higher derivative
  expansion of the effective action by the string-inspired method. i},  {\em
  Zeitschrift f{\"u}r Physik C Particles and Fields} {\bf 64} (1994), no.~1
  111--116.

\bibitem{barvinsky1987beyond}
A.~Barvinsky and G.~Vilkovisky, {\it Beyond the schwinger-dewitt technique:
  Converting loops into trees and in-in currents},  {\em Nuclear Physics B}
  {\bf 282} (1987) 163--188.

\bibitem{barvinsky1990covariant}
A.~Barvinsky and G.~Vilkovisky, {\it Covariant perturbation theory (ii). second
  order in the curvature. general algorithms},  {\em Nuclear Physics B} {\bf
  333} (1990), no.~2 471--511.

\bibitem{barvinsky1990covariant3}
A.~Barvinsky and G.~Vilkovisky, {\it Covariant perturbation theory (iii).
  spectral representations of the third-order form factors},  {\em Nuclear
  Physics B} {\bf 333} (1990), no.~2 512--524.

\bibitem{mukhanov2007introduction}
V.~Mukhanov and S.~Winitzki, {\em Introduction to quantum effects in gravity}.
\newblock Cambridge University Press, 2007.

\bibitem{pang2012relation}
H.~Pang, W.-S. Dai, and M.~Xie, {\it Relation between heat kernel method and
  scattering spectral method},  {\em The European Physical Journal C} {\bf 72}
  (2012), no.~5 1--13.

\bibitem{dai2009number}
W.-S. Dai and M.~Xie, {\it The number of eigenstates: counting function and
  heat kernel},  {\em Journal of High Energy Physics} {\bf 2009} (2009), no.~02
  033.

\bibitem{dai2003quantum}
W.-S. Dai and M.~Xie, {\it Quantum statistics of ideal gases in confined
  space},  {\em Physics Letters A} {\bf 311} (2003), no.~4 340--346.

\bibitem{dai2007interacting}
W.-S. Dai and M.~Xie, {\it Interacting quantum gases in confined space: Two-and
  three-dimensional equations of state},  {\em Journal of Mathematical Physics}
  {\bf 48} (2007), no.~12 123302.

\bibitem{dai2005hard}
W.~Dai and M.~Xie, {\it Hard-sphere gases as ideal gases with multi-core
  boundaries: An approach to two-and three-dimensional interacting gases},
  {\em EPL (Europhysics Letters)} {\bf 72} (2005), no.~6 887.

\bibitem{davies1990heat}
E.~B. Davies, {\em Heat kernels and spectral theory}, vol.~92.
\newblock Cambridge University Press, 1990.

\bibitem{bytsenko2003analytic}
A.~A. Bytsenko, G.~Cognola, V.~Moretti, S.~Zerbini, and E.~Elizalde, {\em
  Analytic aspects of quantum fields}.
\newblock World Scientific, 2003.

\bibitem{graham2002calculating}
N.~Graham, R.~Jaffe, V.~Khemani, M.~Quandt, M.~Scandurra, and H.~Weigel, {\it
  Calculating vacuum energies in renormalizable quantum field theories: A new
  approach to the casimir problem},  {\em Nuclear Physics B} {\bf 645} (2002),
  no.~1 49--84.

\bibitem{farhi1998finite}
E.~Farhi, N.~Graham, P.~Haagensen, and R.~L. Jaffe, {\it Finite quantum
  fluctuations about static field configurations},  {\em Physics Letters B}
  {\bf 427} (1998), no.~3 334--342.

\bibitem{farhi2000heavy}
E.~Farhi, N.~Graham, R.~Jaffe, and H.~Weigel, {\it Heavy fermion stabilization
  of solitons in 1+ 1 dimensions},  {\em Nuclear Physics B} {\bf 585} (2000),
  no.~1 443--470.

\bibitem{farhi2002searching}
E.~Farhi, N.~Graham, R.~Jaffe, and H.~Weigel, {\it Searching for quantum
  solitons in a (3+ 1)-dimensional chiral yukawa model},  {\em Nuclear Physics
  B} {\bf 630} (2002), no.~1 241--268.

\bibitem{rahi2009scattering}
S.~J. Rahi, T.~Emig, N.~Graham, R.~L. Jaffe, and M.~Kardar, {\it Scattering
  theory approach to electrodynamic casimir forces},  {\em Physical Review D}
  {\bf 80} (2009), no.~8 085021.

\bibitem{olver2010nist}
F.~W. Olver, D.~W. Lozier, R.~F. Boisvert, and C.~W. Clark, {\em NIST handbook
  of mathematical functions}.
\newblock Cambridge University Press, 2010.

\bibitem{gusev2009heat}
Y.~V. Gusev, {\it Heat kernel expansion in the covariant perturbation theory},
  {\em Nuclear physics B} {\bf 807} (2009), no.~3 566--590.

\bibitem{joachain1975quantum}
C.~J. Joachain, {\em Quantum collision theory}.
\newblock North-Holland Publishing Company, Amsterdam, 1975.

\bibitem{nepomechie1985calculating}
R.~I. Nepomechie, {\it Calculating heat kernels},  {\em Physical Review D} {\bf
  31} (1985), no.~12 3291.

\bibitem{mcavity1991dewitt}
D.~M. McAvity and H.~Osborn, {\it A dewitt expansion of the heat kernel for
  manifolds with a boundary},  {\em Classical and Quantum Gravity} {\bf 8}
  (1991), no.~4 603.

\bibitem{branson1999heat}
T.~P. Branson, P.~B. Gilkey, K.~Kirsten, and D.~V. Vassilevich, {\it Heat
  kernel asymptotics with mixed boundary conditions},  {\em Nuclear Physics B}
  {\bf 563} (1999), no.~3 603--626.

\bibitem{mcavity1991asymptotic}
D.~M. McAvity and H.~Osborn, {\it Asymptotic expansion of the heat kernel for
  generalized boundary conditions},  {\em Classical and Quantum Gravity} {\bf
  8} (1991), no.~8 1445.

\bibitem{van1985explicit}
A.~Van~de Ven, {\it Explicit counteraction algorithms in higher dimensions},
  {\em Nuclear Physics B} {\bf 250} (1985), no.~1 593--617.

\bibitem{avramidi1989background}
I.~Avramidi, {\it Background field calculations in quantum field theory (vacuum
  polarization)},  {\em Theoretical and Mathematical Physics} {\bf 79} (1989),
  no.~2 494--502.

\bibitem{avramidi1990covariant}
I.~Avramidi, {\it The covariant technique for the calculation of the heat
  kernel asymptotic expansion},  {\em Physics Letters B} {\bf 238} (1990),
  no.~1 92--97.

\bibitem{avramidi1991covariant}
I.~Avramidi, {\it A covariant technique for the calculation of the one-loop
  effective action},  {\em Nuclear Physics B} {\bf 355} (1991), no.~3 712--754.

\bibitem{fliegner1998higher}
D.~Fliegner, P.~Haberl, M.~Schmidt, and C.~Schubert, {\it The higher derivative
  expansion of the effective action by the string inspired method, ii},  {\em
  Annals of Physics} {\bf 264} (1998), no.~1 51--74.

\bibitem{van1998index}
A.~E. Van~de Ven, {\it Index-free heat kernel coefficients},  {\em Classical
  and Quantum Gravity} {\bf 15} (1998), no.~8 2311.

\bibitem{barvinsky1994basis}
A.~Barvinsky, Y.~V. Gusev, G.~Vilkovisky, and V.~Zhytnikov, {\it The basis of
  nonlocal curvature invariants in quantum gravity theory. third order},  {\em
  Journal of Mathematical Physics} {\bf 35} (1994), no.~7 3525--3542.

\bibitem{barvinsky1995one}
A.~Barvinsky, Y.~V. Gusev, G.~Vilkovisky, and V.~Zhytnikov, {\it The one-loop
  effective action and trace anomaly in four dimensions},  {\em Nuclear Physics
  B} {\bf 439} (1995), no.~3 561--582.

\bibitem{barvinsky1994asymptotic}
A.~Barvinsky, Y.~V. Gusev, G.~Vilkovisky, and V.~Zhytnikov, {\it Asymptotic
  behaviors of the heat kernel in covariant perturbation theory},  {\em Journal
  of Mathematical Physics} {\bf 35} (1994), no.~7 3543--3559.

\bibitem{codello2013non}
A.~Codello and O.~Zanusso, {\it On the non-local heat kernel expansion},  {\em
  Journal of Mathematical Physics} {\bf 54} (2013), no.~1 013513.

\bibitem{shore2002local}
G.~M. Shore, {\it A local effective action for photon--gravity interactions},
  {\em Nuclear Physics B} {\bf 646} (2002), no.~1 281--300.

\bibitem{gusynin1989new}
V.~P. Gusynin, {\it New algorithm for computing the coefficients in the heat
  kernel expansion},  {\em Physics Letters B} {\bf 225} (1989), no.~3 233--239.

\bibitem{gusynin1990seeley}
V.~Gusynin, {\it Seeley-gilkey coefficients for fourth-order operators on a
  riemannian manifold},  {\em Nuclear Physics B} {\bf 333} (1990), no.~1
  296--316.

\bibitem{gusynin1991heat}
V.~P. Gusynin, E.~Gorbar, and V.~Romankov, {\it Heat kernel expansion for
  nonminimal differential operations and manifolds with torsion},  {\em Nuclear
  Physics B} {\bf 362} (1991), no.~1 449--471.

\bibitem{dai2010approach}
W.-S. Dai and M.~Xie, {\it An approach for the calculation of one-loop
  effective actions, vacuum energies, and spectral counting functions},  {\em
  Journal of High Energy Physics} {\bf 2010} (2010), no.~6 1--29.

\bibitem{iochum2012spectral}
B.~Iochum, C.~Levy, and D.~Vassilevich, {\it Spectral action beyond the
  weak-field approximation},  {\em Communications in Mathematical Physics} {\bf
  316} (2012), no.~3 595--613.

\bibitem{mostafazadeh2014dynamical}
A.~Mostafazadeh, {\it A dynamical formulation of one-dimensional scattering
  theory and its applications in optics},  {\em Annals of Physics} {\bf 341}
  (2014) 77--85.

\bibitem{mei2003theory}
J.~Mei, Z.~Liu, J.~Shi, and D.~Tian, {\it Theory for elastic wave scattering by
  a two-dimensional periodical array of cylinders: An ideal approach for
  band-structure calculations},  {\em Physical Review B} {\bf 67} (2003),
  no.~24 245107.

\bibitem{jena2000dislocation}
D.~Jena, A.~C. Gossard, and U.~K. Mishra, {\it Dislocation scattering in a
  two-dimensional electron gas},  {\em Applied Physics Letters} {\bf 76}
  (2000), no.~13 1707--1709.

\bibitem{amin2013scattering}
M.~A. Amin, E.~A. Lim, and I.-S. Yang, {\it A scattering theory of
  ultrarelativistic solitons},  {\em Physical Review D} {\bf 88} (2013), no.~10
  105024.

\bibitem{novikov2007elastic}
D.~Novikov, {\it Elastic scattering theory and transport in graphene},  {\em
  Physical Review B} {\bf 76} (2007), no.~24 245435.

\bibitem{neto2009electronic}
A.~C. Neto, F.~Guinea, N.~Peres, K.~S. Novoselov, and A.~K. Geim, {\it The
  electronic properties of graphene},  {\em Reviews of modern physics} {\bf 81}
  (2009), no.~1 109.

\bibitem{arnold2001bec}
P.~Arnold and G.~Moore, {\it Bec transition temperature of a dilute homogeneous
  imperfect bose gas},  {\em Physical Review Letters} {\bf 87} (2001), no.~12
  120401.

\bibitem{kastening2004bose}
B.~Kastening, {\it Bose-einstein condensation temperature of a homogenous
  weakly interacting bose gas in variational perturbation theory through seven
  loops},  {\em Physical Review A} {\bf 69} (2004), no.~4 043613.

\bibitem{mineev2004transverse}
V.~Mineev, {\it Transverse spin dynamics in a spin-polarized fermi liquid},
  {\em Physical Review B} {\bf 69} (2004), no.~14 144429.

\bibitem{mineev2005theory}
V.~Mineev, {\it Theory of transverse spin dynamics in a polarized fermi liquid
  and an itinerant ferromagnet},  {\em Physical Review B} {\bf 72} (2005),
  no.~14 144418.

\bibitem{sabatier2000past}
P.~C. Sabatier, {\it Past and future of inverse problems},  {\em Journal of
  Mathematical Physics} {\bf 41} (2000) 4082.

\bibitem{faddeyev1963inverse}
L.~D. Faddeyev and B.~Seckler, {\it The inverse problem in the quantum theory
  of scattering},  {\em Journal of Mathematical Physics} {\bf 4} (1963) 72.

\bibitem{liu2000nonlinear}
W.-M. Liu, B.~Wu, and Q.~Niu, {\it Nonlinear effects in interference of
  bose-einstein condensates},  {\em Physical Review Letters} {\bf 84} (2000),
  no.~11 2294.

\bibitem{nicoleau2000inverse}
F.~Nicoleau, {\it An inverse scattering problem with the aharonov--bohm
  effect},  {\em Journal of Mathematical Physics} {\bf 41} (2000) 5223.

\bibitem{liu2014scattering}
T.~Liu, W.-D. Li, and W.-S. Dai, {\it Scattering theory without large-distance
  asymptotics},  {\em Journal of High Energy Physics} {\bf 2014} (2014), no.~6
  1--12.

\bibitem{flugge1994practical}
S.~Fl{\"u}gge, {\em Practical quantum mechanics}.
\newblock Springer Verlag, 1994.

\bibitem{watson1944theory}
G.~Watson, {\em Theory of Bessel functions, 1922}.
\newblock Cambridge university press, 1944.

\end{thebibliography}

\end{document}